\documentclass[journal,twoside,web]{ieeecolor}
\usepackage{generic}
\usepackage{cite}
\usepackage{amsmath,amssymb,amsfonts}
\usepackage{algorithmic}
\usepackage{graphicx}
\usepackage{textcomp}

\usepackage[normalem]{ulem}
\usepackage{algorithm}

\usepackage{amsthm}

\newtheorem{proposition}{Proposition}

\usepackage{multirow}
\usepackage{array}
\usepackage{booktabs}
\let\labelindent\relax
\usepackage{enumitem}
\newcolumntype{C}[1]{>{\centering\arraybackslash}m{#1}}
\newcolumntype{?}{!{\vrule width 1.0pt}}
\bstctlcite{IEEEtran:BSTcontrol}
\newcommand{\STATEX}{\item[]}

\usepackage{etoolbox}
\makeatletter
\def\@IEEEBIOskipN{0.75\baselineskip}
\expandafter\patchcmd\csname\string\biography\endcsname
  {\vskip \@IEEEBIOskipN plus 1fil minus 0\baselineskip}
  {\vskip \@IEEEBIOskipN}
  {}
  {\PackageWarning{TCNS}{Biography spacing patch failed}}
\makeatother

\makeatletter

\renewenvironment{proof}[1][\proofname]{%
  \par
  \pushQED{\qed}%
  \normalfont
  \topsep 4\p@ \@plus 2\p@ \@minus 2\p@
  \trivlist
  \item[\hskip\labelsep\itshape #1\@addpunct{.}]%
  \ignorespaces
}{%
  \popQED
  \endtrivlist
  \@endpefalse
}
\makeatother

\def\BibTeX{{\rm B\kern-.05em{\sc i\kern-.025em b}\kern-.08em
    T\kern-.1667em\lower.7ex\hbox{E}\kern-.125emX}}
\begin{document}
\bstctlcite{BSTcontrol}
\title{Multi-Scale Datacenter Power Modulation}
\author{Akshay Sreekumar, Nicolas Christianson, Fiodar Kazhamiaka, and Ram Rajagopal
\thanks{A. Sreekumar is with the Department of Electrical Engineering, Stanford University, Stanford, CA 94305 USA. (e-mail: akshay81@stanford.edu)}
\thanks{N. Christianson was with the Department of Management Science and Engineering, Stanford University, Stanford, CA 94305 USA. He is now with the Department of Computer Science, Johns Hopkins University, Baltimore, MD 21218 USA (e-mail: christianson@jhu.edu)}
\thanks{F. Kazhamiaka is with Microsoft Azure Systems Research, Redmond WA, 98052 USA. (e-mail: fkazhamiaka@microsoft.com)}
\thanks{R. Rajagopal is with the Department of Electrical Engineering, Stanford University, Stanford, CA 94305 USA. (e-mail: ramr@stanford.edu)}}

\IEEEaftertitletext{\vspace{-4em}}
\maketitle
\thispagestyle{headings}

\begin{abstract}
Cloud datacenters must increasingly modulate power in response to time-varying grid and infrastructure constraints. 
We study this problem as finite-horizon control of a networked hybrid dynamical system, where datacenter power and service capacity depend on interactions between servers, workers, and hosted services. 
Power can be reduced through fast continuous worker throttling, which acts immediately but degrades service capacity, and slow discrete server transitions, which provide deeper savings but evolve with delay. 
Coordinating these mechanisms yields a high-dimensional mixed-integer dynamic optimization problem which is intractable to solve at scale. 
We propose a hierarchical receding-horizon controller that separates slow server reconfiguration from fast throttling recourse. For fixed server states, the throttling layer reduces to a service-level convex recourse problem solved efficiently by dual decomposition. The server layer then uses a ranked-prefix search that evaluates candidate configurations through the recourse value over the planning horizon. Experiments on realistic instances with over 15,000 servers, 200,000 workers, and 1,400 services show that the controller satisfies time-varying power caps with no violations and substantially lower service impact than fast-only or slow-only baselines. Our method offers significant speedups compared to standard optimization solvers, computing near-optimal plans within a 20 second real-time control interval. 
\end{abstract}

\begin{IEEEkeywords}
Hybrid Systems, Optimal Control
\end{IEEEkeywords}

\section{Introduction}
\label{sec:introduction}

\IEEEPARstart{C}{loud} datacenters are among the largest single-point consumers on the power grid, with individual sites drawing tens to hundreds of megawatts.
In a 2024 survey from EPRI, $60\%$ of utilities surveyed had a datacenter interconnection request of at least $500$MW \cite{larson_utility_2024}. 
Historically, datacenters have operated under the assumption that they can consume as much power as needed up to a contracted limit.
However, this assumption is increasingly strained in the current regime of rapid load growth, where demand increases, driven in part by large-scale AI workloads, are outpacing grid infrastructure development \cite{blanford_powering_2026}. 

Operating datacenters \emph{flexibly} by dynamically modulating their power consumption offers one solution that can help to increase the utilization of existing grid resources while enabling grid operators to better manage uncertainty \cite{european_commission_flexibility_service}.
Grid instability events may demand rapid load shedding to prevent cascading blackouts \cite{skrjanc_systematic_2023}, and grid operators increasingly expect large consumers to participate in demand response and frequency regulation programs as a condition of interconnection \cite{norris_rethinking_2025}. 
Beyond grid constraints, datacenters must also contend with internal infrastructure failures such as cooling system outages, UPS faults, and power distribution equipment going down for maintenance, all of which can require immediate reduction of local power. Failing to meet the requested power limit quickly enough can trip breakers, and contribute to cascading infrastructure failures \cite{lin_slasher_2026}. While these scenarios differ in cause and significance, they all motivate the operational need for \emph{datacenters that can dynamically modulate power consumption in response to a power control signal.}

Implementing power modulation in a production cloud environment is challenging. 
A cloud datacenter is a complex, multi-tenant system where the operator has limited visibility into the workloads running on its infrastructure; ultimately, however, a datacenter's operational power flexibility is limited by performance and availability requirements in servicing these workloads.
Power is consumed by tens of thousands of servers hosting workers that collectively run a diverse population of services, reflecting the application--machine hierarchy used in production cluster managers \cite{verma_large-scale_2015}.
This can be represented as a network connecting servers, workers, and services, where actions at the server and worker level affect the capacity available to services. 
This layered, networked structure between servers, workers, and services creates a fundamental control tradeoff where power reduction actions can be taken at different levels of the network hierarchy (servers and workers), with varying impacts on response time, power savings, and service capacity.

Servers consume \emph{base power} whenever they remain powered on, even with no workers running. 
Workers cause additional power draw in proportion to their CPU frequency; we refer to this as \emph{dynamic power}.
On short timescales, the operator can modulate worker CPU frequencies, providing a fast and continuous control knob that responds immediately to power budget fluctuations. However, reducing frequency primarily affects dynamic power and quickly degrades available service capacity, making it a weak but responsive mechanism for immediate power reductions. Because server power consumption is dominated by base power, throttling workers generally cannot recover the same power as shutting servers down, even if both reduce worker capacity equally.
On longer timescales, the operator can transition servers between active and inactive states, which yields substantially larger and more persistent power savings by eliminating base power consumption. However, these actions are discrete, incur non-negligible transition delays, and must be initiated in advance of when their effects are needed.

Applying either mechanism in isolation can reduce power, but may incur excessive service impact. This motivates approaching the problem with a joint, multi-scale controller, where worker throttling handles short-term fluctuations while server reconfiguration anticipates more persistent power changes. However, this joint formulation poses two major challenges hindering deployment at datacenter scale. First, the delayed server dynamics induce a large-scale, multi-stage combinatorial control problem, even when worker throttles are fixed. Second, for any fixed server state, optimizing over worker throttles remains a large-scale convex optimization problem over the datacenter network, and this problem must be solved for multiple candidate server configurations to enable real-time, near-optimal control. Our work thus seeks to answer the following question: \emph{can we develop a principled, near-optimal controller that jointly optimizes worker throttles and server states to satisfy target power constraints while minimizing impact on service quality?} 

\subsection{Contributions}
We model datacenter power modulation as a finite-horizon, hybrid, constrained control task and address the aforementioned challenges by proposing a hierarchical receding-horizon controller. Our specific contributions are as follows:

\begin{itemize}[leftmargin=*]
    \item \textbf{Dynamic Control Formulation.} We introduce a finite-horizon control formulation of datacenter power modulation, capturing datacenter topology information, power and capacity characteristics, and service-level impact.
    \item \textbf{Hierarchical Control Architecture.} We develop a receding-horizon control scheme that decomposes the problem into slow server control and fast throttling recourse layers, enabling scalable, real-time operation.
    \item \textbf{Fast Convex Recourse Solver.} We reduce worker throttling to a lower-dimensional service-level convex recourse problem, which can be solved efficiently via dual decomposition. 
    \item \textbf{Ranked-Prefix Heuristic for Server Actuation.} We develop a ranked-prefix heuristic that scores candidate server configurations via the downstream recourse value over the planning horizon.
    \item \textbf{Empirical Results.} We provide extensive experiments evaluating the proposed controller on realistic datacenter simulations with time-varying power constraints, demonstrating the scalability and near-optimality of our algorithm.
\end{itemize}

\subsection{Related Work}
Our work contributes to a long line of prior work in datacenter operation and optimal control of hybrid systems. 

\emph{\textbf{Datacenter Power Capping.}} Power capping systems have been widely studied as a mechanism for safely operating datacenters near provisioned limits. 
Production systems such as Dynamo monitor the datacenter power hierarchy and issue server-level power caps to prevent breaker trips, while CapMaestro develops coordinated priority-aware controllers for enforcing limits across multiple levels of power infrastructure \cite{wu_dynamo_2016, li_scalable_2019}. Related systems use priority-aware capping to support power oversubscription by preferentially reducing lower-priority work, and QoS-aware capping systems such as Thunderbolt and PADS use workload-aware throttling or resource controls to enforce power budgets while limiting performance degradation \cite{sakalkar_data_2020, li_thunderbolt_2020, savasci_pads_2024}.
Slasher~\cite{lin_slasher_2026} studies an impact aware server shutdown policy, but our controller considers a broader action space that coordinates frequency scaling with server shutdowns. 

\emph{\textbf{Dynamic Capacity Management.}} A related line of work studies dynamic capacity management, where active compute capacity is adapted over time to reduce energy while maintaining application performance. Early work on dynamic provisioning for multi-tier internet applications used queuing models together with predictive and reactive mechanisms to allocate resources across application tiers under time-varying load~\cite{urgaonkar_dynamic_2005}. Server right-sizing has also been studied as an online control problem that trades off active-server operating costs against switching costs incurred when changing the number of active servers \cite{lin_dynamic_2013}, with subsequent systems such as AutoScale developing robust capacity management strategies for multi-tier datacenters \cite{gandhi_autoscale_2012}. More recent theoretical work extends right-sizing formulations to heterogeneous datacenters with discrete server decisions \cite{albers_algorithms_2021}. These works establish server provisioning as an important mechanism for datacenter energy management. In this work, control is driven by an exogenous power budget rather than workload demand alone, requiring coordination of server provisioning with throttling recourse to modulate aggregate power while minimizing service impact. 

\emph{\textbf{Datacenters as Flexible Grid Loads.}} Datacenter flexibility has also been studied as a grid-facing resource, motivated by the ability to reshape compute across time, space, and infrastructure. Prior work has examined ancillary service participation through load reduction, frequency regulation using dummy loads, and QoS-aware server power reshaping with DVFS and complementary workloads \cite{ghamkhari_data_2012, wang_frequency_2019, jahanshahi_powermorph_2022}. Chen et al. demonstrated server level tracking on an ISO regulation signal while maintaining QoS with dynamic server power capping \cite{chen_dynamic_2013}. Adjacent work has examined carbon-aware computing systems which exploit temporal and geographic flexibility to reduce emissions while meeting service requirements \cite{radovanovic_carbon-aware_2021, hall_carbon-aware_2025,lindbergGuideReducingCarbon2021,lechowiczLearningAugmentedCompetitiveAlgorithms2025b}. This large body of work underscores the grid value of flexible compute. Our work complements this literature by studying how fast worker throttling and slow server configuration can be coordinated to deliver flexibility and satisfy a time-varying power budget while limiting service impact. 

\emph{\textbf{Hybrid Control and Optimization.}} Hybrid model predictive control is a natural framework for systems with coupled continuous and discrete decisions, but the resulting online optimization problems are typically mixed-integer programs that are difficult to solve at scale \cite{borrelli_predictive_2017}. Recent work has reduced online computation through warm-starting, exploiting problem structure, and approximate solution methods \cite{marcucci_warm_2021, hespanhol_structure_2019, takapoui_simple_2020}. In parallel, decomposition methods for convex network resource-allocation problems can efficiently solve large-scale convex optimization problems \cite{chiang_layering_2007, palomar_tutorial_2006}, and have also been applied to distributed power allocation in computing clusters \cite{badiei_diba_2016}. However, existing methods do not directly address the combination of an enormous, combinatorial set of feasible server decisions, delay dynamics, and large-scale convex (nonlinear) structure in datacenter-scale power modulation. 

\section{System Model}
We consider a cloud datacenter operating under time-varying exogenous power constraints. 
At each time step, the operator must coordinate server state transitions and worker throttling decisions to ensure that the datacenter's aggregate power consumption remains within the available budget while minimizing the impact on hosted services. 
Here, we define each component of the datacenter system model to allow us to formally state the dynamic control problem in Section \ref{sec:dynamic_control_problem}.

\subsection{Datacenter Topology and Control Dynamics}
Let $\mathcal{S}$ denote the set of servers in a datacenter, and let $W_s$ denote the set of workers hosted on server $s \in \mathcal{S}$. 
The set of all workers is given by $W := \bigcup_{s \in S} W_s$. 
Each worker $w$ belongs to exactly one service $q(w) \in Q$, where $Q$ is the set of all services. 
The set of workers belonging to service $q$ is denoted $W_q := \{w \mid q(w) = q\}$. We write $s(w)$ for the server hosting worker $w$.
All workers of a given service are similar by design, since the typical "horizontal scaling" deployment pattern involves balancing load across a set of worker replicas. 
Hence, they share the same power, capacity, and load characteristics. 
Workers for a single service may reside on different servers.
We assume that worker-to-server and worker-to-service assignments are fixed over the control horizon. Thus, workers are not reassigned when their host server transitions out of the \texttt{ON} state. 

For every worker $w$ in the system, we define its corresponding frequency throttle $\alpha_w \in [\tau_w, 1]$, where $\tau_w := \tau_{q(w)} \in [0,1)$ is a service-level threshold at which the worker consumes no power and provides no useful capacity. The throttle $\alpha_w$ represents the normalized CPU frequency of the worker. Each worker's power consumption and provided capacity is a function of its throttle. At time $t$, we write $\alpha_{w,t}$ for the throttle of worker $w$, and collect all worker throttles as $\alpha_t:= (\alpha_{w,t})_{w \in \mathcal{W}}$.

The operator can also take server-level control actions, which reduce more power but evolve with delay. Specifically, the delay dynamics model a graceful shutdown procedure, where instead of abruptly losing power, the server safely persists its state before shutdown. 
For each server $s \in \mathcal{S}$ and time step $t$, let 
$x_{s,t} \in \{\texttt{ON},\texttt{OFF},\texttt{BOOT}(k),\texttt{SHUT}(k): k=1,\ldots,K\}$ 
denote the pre-action server state. 
State $\texttt{ON}$ indicates that the server is active and its hosted workers provide useful capacity, while $\texttt{OFF}$ indicates that the server is fully inactive. The transition states $\texttt{BOOT}(k)$ and $\texttt{SHUT}(k)$ represent servers in the middle of a boot or shutdown, respectively, with $k$ time slots remaining until completion. 

During these transition states, the server draws power but provides no capacity. This power draw corresponds to the idle power draw of the server itself as well as a fixed power draw from the transitioning workers. At each time $t$, the operator chooses a server action $u_{s,t}\in\{\texttt{noop},\texttt{BOOT},\texttt{SHUTDOWN}\}$ given the current server state. 
Specifically, \texttt{BOOT} actions can only be issued from the \texttt{OFF} state, \texttt{SHUTDOWN} actions can only be issued from the \texttt{ON} state, and no further actions are allowed until a server completes a transition. \texttt{noop} denotes no action is taken. 

Applying the server action yields a \emph{post-action} state $x_{s,t}^+$ that determines the realized power consumption and capacity contribution during time step $t$. We compactly write $x_t := (x_{s,t})_{s \in \mathcal S}$ and $u_t := (u_{s,t})_{s \in \mathcal S}$, and denote the collection of post-action states by $x_t^+ = \phi(x_t, u_t)$, where $\phi$ acts componentwise across servers and represents their transition dynamics. At the end of the time step, the state advances according to the countdown map $x_{t+1} = \psi(x_t^+)$, where
\begin{equation}
\psi(\texttt{BOOT}(k)) =
\begin{cases}
\texttt{BOOT}(k-1), & k=2,\dots,K,\\
\texttt{ON}, & k=1,
\end{cases}
\end{equation}


\begin{equation}    
\psi(\texttt{SHUT}(k)) =
\begin{cases}
\texttt{SHUT}(k-1), & k=2,\dots,K,\\
\texttt{OFF}, & k=1.
\end{cases}
\end{equation}
Fig.~\ref{fig:server_fsm_dynamics} illustrates the transition dynamics $\phi(\cdot)$ for a server.

\begin{figure}[t]
  \centering
  \includegraphics[width=\columnwidth]{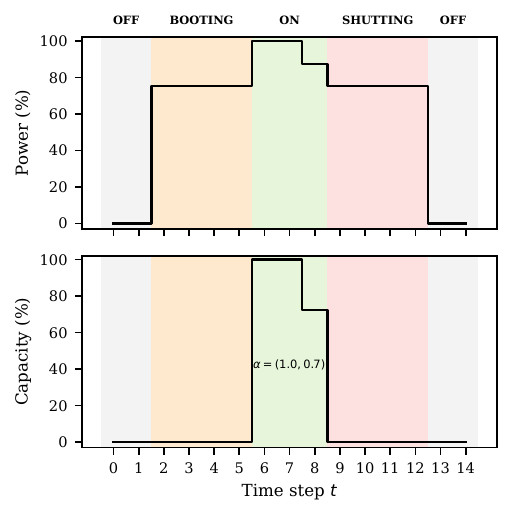}
  \caption{Representative server transition dynamics. Booting immediately increases power consumption without increasing capacity; While \texttt{ON}, power usage includes base power and throttle-dependent worker power. Two representative throttles levels ($\alpha=1 $ and $\alpha=0.7$) are shown during \texttt{ON}. Shutting down immediately decreases capacity while base power consumption decreases after a delay.}
  \label{fig:server_fsm_dynamics}
\end{figure}
The post-action state determines indicators $z^P_{s,t}=\mathbf{1}\{x^+_{s,t}\neq \texttt{OFF}\}$ and $z^C_{s,t}=\mathbf{1}\{x^+_{s,t}=\texttt{ON}\}$, denoting whether server $s$ contributes baseline power and useful capacity, respectively.
For example, $\texttt{ON}$ corresponds to $(z_{s,t}^P,z_{s,t}^C)=(1,1)$,  $\texttt{OFF}$  corresponds to $(0,0)$, and transition states correspond to $(1,0)$.

\subsection{Worker Power Model}
\label{sec:worker_power_model}
Motivated by quadratic processor-power models \cite{hager_exploring_2016}, we model each worker $w$'s power consumption as a convex quadratic function of its throttle level $\alpha_w$:

\begin{equation}
    p_w(\alpha_w) = a_w \alpha_w^2 + b_w \alpha_w, \quad \alpha_w \geq 0,
    \label{eq:worker_power}
\end{equation}
where the coefficients $a_w, b_w$ are determined by the service $q(w)$. We define the above-threshold power as 
\begin{equation}
    p_w^{\textrm{above}}(\alpha_w) := p_w(\alpha_w) - p_w(\tau_w).
    \label{eq:worker_power_above}
\end{equation}
By construction, $p_w^{\textrm{above}}(\tau_w) = 0$. 

\subsection{Worker Capacity Model}
\label{sec:worker_capacity_model}

Each worker $w$ has a concave capacity function
\begin{equation}
    g_w(\alpha_w) = \rho_w \cdot \alpha_w^{\gamma_w}, \quad \gamma_w \in (0,1),
    \label{eq:worker_capacity}
\end{equation}
where $\rho_w$ is the product of the number of cores assigned to that worker and the per-core capacity for a worker $w$ on service $q(w)$, and $\gamma_w$ captures diminishing returns from frequency scaling. The empirical characterization of application performance curves as a function of power or processor frequency is well established, in particular for datacenter applications~\cite{kanev_tradeoffs_2014, qiu_revisiting_2025, patel_characterizing_2024}. In practice, the parameters of such models can be estimated from offline service profiling, telemetry gathered from throttling events, or coarser workload-class models when service-specific measurements are unavailable. As with power, we define the above-threshold capacity as $g_w^{\textrm{above}}(\alpha_w) := g_w(\alpha_w) - g_w(\tau_w)$, so that $g_w^{\textrm{above}}(\tau_w) = 0$.

\subsection{Datacenter Power Aggregation}
\label{sec:power_aggregation}

We now describe how power consumption is aggregated from the worker level to the server and datacenter levels. 
Let $b_s$ denote the baseline power drawn by server $s$ whenever it is not $\texttt{OFF}$ (i.e., when it is \texttt{ON} or in a boot or shutdown state). 
This term captures the server-side IT power that is independent of worker-level processor throttles, such as memory, storage, and networking devices. 
Let $P_{\mathrm{tr},s}$ denote the additional power drawn while server $s$ is in a boot or shutdown transition state. 
This is a constant power defined per server that is some fixed fraction of that server's dynamic power. It represents the power used by processes that run during boot and shutdown states---e.g., during shutdown, processes that flush buffers to persistent storage and gracefully terminate running processes, or during boot, processes that perform hardware health checks, boot the OS, and prepare the server for hosting VMs. Altogether, server $s$'s power consumption at step $t$ is
\begin{equation}
\begin{split}
P_{s,t}(\alpha_t, z_{s,t}^P, z_{s,t}^C)
&=
z_{s,t}^P b_s
+
z_{s,t}^C \sum_{w \in W_s} p_w^{\textrm{above}}(\alpha_{w,t}) \\
&\quad +
\left(z_{s,t}^P - z_{s,t}^C\right) P_{\mathrm{tr},s}.
\end{split}
\label{eq:server_power_aggregation}
\end{equation}

To account for non-IT overhead such as cooling and power delivery losses, we scale the aggregate IT power by the datacenter power usage effectiveness (PUE). 
We approximate PUE as a constant factor that does not capture the delayed response of cooling power to changes in IT load. 
Let $B$ denote a fixed background power term that is independent of the control actions, including disaggregated storage servers and networking equipment.
The total datacenter power consumption at step $t$ is
\begin{equation}
P_{t}^{\mathrm{tot}}(\alpha_t, x_t^+)
=
\mathrm{PUE}
\left(
B + \sum_{s \in \mathcal{S}} P_{s,t}(\alpha_t, z_{s,t}^P, z_{s,t}^C)
\right).
\label{eq:dc_total_power}
\end{equation}
Note that the dependence on $x_t^+$ enters through $z_{s,t}^P$ and $z_{s,t}^C$.

\subsection{Service Capacity Aggregation}
\label{sec:capacity_aggregation}

Service capacity is aggregated across workers. Only workers on active servers (i.e. $\texttt{ON}$ servers) can contribute capacity.
Since workers on the same service may be distributed across multiple servers, we define capacity at the service level.

Using the capacity model from Section~\ref{sec:worker_capacity_model}, the total capacity allocated to service $q \in Q$ at step $t$ is
\begin{equation}
C_{q,t}(\alpha_t, x_t^+)
=
\sum_{w \in W_q} z_{s(w),t}^C \, g_w^{\textrm{above}}(\alpha_{w,t}).
\label{eq:service_capacity_aggregation}
\end{equation}

\subsection{Impact Model}
\label{sec:impact_model}

Following the work in \cite{lin_slasher_2026}, we define a penalty $\mathcal{I}_q(C_{q,t})$ for allocating insufficient capacity to a service $q$ as 

\begin{equation}
    \mathcal{I}_q(C_{q,t}) = \frac{\mathbb{E}[(l_q - C_{q,t})_+]}{\mathbb{E}[l_q]},
    \label{eq:impact}
\end{equation}
where $l_q$ is a random variable associated with the \emph{load} of a service $q$, expressed in the same normalized capacity units as $C_{q,t}$. We estimate the load distribution from worker CPU utilization traces. For each service, worker-level utilization time series are aggregated and binned to form an empirical load distribution, from which we draw samples. 
We approximate (\ref{eq:impact}) via the sample average by taking $N$ samples of the load:
\begin{equation}
    \mathcal{I}_q(C_{q,t}) \approx \frac{1}{N \bar{l}_q} \sum_{j=1}^N \left( l_q^{(j)} - C_{q,t} \right)_+,
    \label{eq:impact_saa}
\end{equation}
where $\bar{l}_q = \frac{1}{N}\sum_{j=1}^N l_q^{(j)}$. This function is convex and nonincreasing in $C_{q,t}$, equals zero when $C_{q,t}$ exceeds all samples, and approaches 1 as $C_{q,t} \to 0$. It provides a comparable measure of impact across services of different sizes without requiring visibility into application level performance metrics.

\subsection{Deviation Model}

For simplicity, we assume the nominal operating point of the datacenter to be the fully provisioned state in which all servers are \texttt{ON} with their workers at $\alpha=1$, but the following construction can be defined relative to any specified nominal server configuration and throttle profile. We introduce a deviation penalty that softly biases the controller back toward this nominal point once the power excursion is over, and breaks ties between equally low-impact actions by favoring configurations with fewer idle servers and less throttling. 

At each time $t$, we introduce the deviation penalty 
$D_t(\alpha_t,x_t^+) = D_t^{\mathrm{server}} + D_t^\alpha$, where 
$D_t^{\mathrm{server}}=\sum_{s\in\mathcal{S}}\big[(1-z^P_{s,t})+(1-z^C_{s,t})\big]$ 
and 
$D_t^\alpha=\sum_{\{w:z^C_{s(w),t}=1\}}(1-\alpha_{w,t})$.

\section{Dynamic Power Modulation Control Problem}

\subsection{Finite-Horizon Dynamic Control Problem}
\label{sec:dynamic_control_problem}

We now state the full dynamic power modulation control problem. At each time step $t$, the controller must choose server actions $u_t$ and worker throttles $\alpha_t$ so as to keep the datacenter within its power budget while minimizing the resulting impact on hosted services. Given selected actions $u_t$, the server states evolve via the transition dynamics $x_t^+ = \phi(x_t,u_t)$ and $x_{t+1} = \psi(x_t^+)$,
while the realized post-action state $x_t^+$ determines the power and capacity available during time step $t$ through the models developed in Sections~\ref{sec:power_aggregation} and~\ref{sec:capacity_aggregation}.

Using the service impact model from Section~\ref{sec:impact_model}, we define the stage cost at time $t$ as
\begin{equation}
J_t(\alpha_t,x_t^+)
=
\sum_{q \in Q} \theta_q \, \mathcal{I}_q\!\left(C_{q,t}(\alpha_t,x_t^+)\right) + \epsilon D_t(\alpha_t, x_t^+).
\label{eq:stage_cost}
\end{equation}
where, following \cite{lin_slasher_2026}, we define a priority weight $\theta_q \geq 0$ reflecting the operational importance of service $q$, and $\epsilon \ll 1$ is a weight for the deviation penalty. 

Let $\mathcal{H}:=\{0,\ldots,H-1\}$. At each control time $t$, given the current state $x_t$, an $H$ step power budget forecast  $\{\hat P_{t+r}^{\max}\}_{r\in \mathcal{H}}$, and load samples, we can now state the finite-horizon dynamic control problem:
\begin{subequations}
\label{prob:dynamic_control}
\begin{align}
\quad \underset{\substack{\alpha_{t:t+H-1},\\u_{t:t+H-1}}}{\text{minimize}} \quad
& \sum_{r=0}^{H-1} J_{t+r}(\alpha_{t+r},x_{t+r}^+) \label{prob:dynamic_obj} \\
\text{s.t.} \quad
& u_{s,t+r} \in \mathcal{U}(x_{s,t+r}) \quad \forall r \in \mathcal{H}, \forall s \in \mathcal{S}, \label{con:dynamic_admissible} \\
& x_{t+r}^+ = \phi(x_{t+r},u_{t+r}) \quad \forall r \in \mathcal{H}, \label{con:dynamic_post_state} \\
& x_{t+r+1} = \psi(x_{t+r}^+) \quad \forall r \in \mathcal{H}, \label{con:dynamic_next_state} \\
& P_{t+r}^{\mathrm{tot}}(\alpha_{t+r},x_{t+r}^+) \leq \hat{P}^{\max}_{t+r} \quad \forall r \in \mathcal{H}, \label{con:dynamic_power} \\
& \tau_w \leq \alpha_{w,t+r} \leq 1 \quad \forall r \in \mathcal{H}, \forall w \in \mathcal{W}, \label{con:dynamic_throttle} 
\end{align}
\end{subequations}
Problem~\eqref{prob:dynamic_control} is the $H$-step optimization solved in our model predictive control (MPC) scheme. At each time step, the controller applies only the first server action and throttle vector, updates the state and forecast, and then resolves the problem. 
Worker placement is fixed, so the discrete decisions govern only the server state and, as a result, the availability of their hosted workers. 
Constraint~\eqref{con:dynamic_admissible} limits the server actions to those admissible given the current server states. Constraints~\eqref{con:dynamic_post_state} and~\eqref{con:dynamic_next_state} enforce the server transition dynamics via $\phi(\cdot)$ and $\psi(\cdot)$. Constraint~\eqref{con:dynamic_power} enforces the datacenter power budget, while constraint~\eqref{con:dynamic_throttle} enforces worker throttle bounds. The discrete server actions are coupled across time through the delayed transition dynamics, while the continuous worker throttles affect the instantaneous power--capacity tradeoff within each slot.
\vspace{-1em}

\subsection{Stagewise Reduction and $\alpha$-Recourse Value Function}
\label{sec:recourse_reduction}

Problem~\eqref{prob:dynamic_control} optimizes jointly over server actions and worker throttles, but admits a simpler structure. Server actions determine the post-action states $x_t^+$ and are coupled across time by the transition dynamics. Worker throttles, however, have no intertemporal dynamics. For a fixed $x_t^+$, the vector $\alpha_t$ enters only through the time $t$ stage cost and power constraint. As such, given a post-action server state $x_t^+$, we can solve for the resulting optimal throttle vector $\alpha_t$ via the following convex program, which we call the $\alpha$-recourse value function:
\begin{equation}
\begin{aligned}
V_t(x_t^+)
=
\min_{\alpha_t}
\quad &
J_t(\alpha_t,x_t^+) \\
\text{s.t.}\quad
& P_t^{\mathrm{tot}}(\alpha_t,x_t^+) \leq \hat P_t^{\max}, \\
& \tau_w \leq \alpha_{w,t} \leq 1,
&& \forall w \in W.
\end{aligned}
\label{eq:stage_value_function}
\end{equation}
The quantity $V_t(x_t^+)$ is the minimum cost attainable at time $t$ by throttling once the post-action server state has been fixed.
For fixed $x_t^+$, Problem~\ref{eq:stage_value_function} is convex in $\alpha$: $P_t^{\mathrm{tot}}(\alpha_t,x_t^+)$ is convex in $\alpha_t$, while each service capacity $C_{q,t}(\alpha_t,x_t^+)$ is concave in $\alpha_t$. Since $\mathcal{I}_q(\cdot)$ is convex and nonincreasing, the composition $\mathcal{I}_q(C_{q,t}(\alpha_t,x_t^+))$ is convex in $\alpha_t$, and therefore so is the objective in \eqref{eq:stage_value_function}. The recourse problem need not be feasible for every post-action server state. We use the extended-value convention that $V_t(x_t^+)=+\infty$ whenever no feasible throttle exists. 

Using \eqref{eq:stage_value_function}, we can write an explicitly hierarchical formulation of the dynamic control problem:
\begin{equation}
\begin{aligned}
\underset{u_{t:t+H-1}}{\textrm{minimize}} \quad &
    \sum_{r=0}^{H-1} V_{t+r}(x_{t+r}^+) \\
\text{s.t.}\quad
& x_t \text{ given}, \\
& u_{s,t+r} \in \mathcal{U}(x_{s,t+r}),
&& \forall r \in \mathcal{H},\ s \in \mathcal{S}, \\
& x_{t+r}^+ = \phi(x_{t+r},u_{t+r}),
&& \forall r \in \mathcal{H}, \\
& x_{t+r+1} = \psi(x_{t+r}^+),
&& \forall r \in \mathcal{H}.
\end{aligned}
\label{eq:reduced_dynamic_control_problem}
\end{equation}
Problem~(\ref{eq:reduced_dynamic_control_problem}) suggests a natural structure to the per-time step mechanics. Once the server actions determine the post-action state $x_t^+$, the worker throttles act as fast within-step recourse, while the server decisions remain coupled across time through the delayed finite-state dynamics. However, even with this explicit two-stage separation between server actions and worker throttles, the reduced problem \eqref{eq:reduced_dynamic_control_problem} is an intractably large multi-stage discrete control problem for realistic-scale datacenters. For instance, if we have $m$ servers to control and a horizon $H$, the action space has size $\sim2^{mH}$, where in a typical instance, $mH \approx 90{,}000$.

In addition, efficiently solving the outer server-control problem (even heuristically) requires fast access to the recourse value $V_t$. At realistic datacenter scale, $V_t(x_t^+)$ is the optimal value of a convex value problem with hundreds of thousands of worker throttle variables, which can take several seconds to solve with standard solvers. Candidate server action rollouts must be scored through repeated evaluations of $V_t$ for different values of $x_t$, making a generic convex solver too slow for online control. 

To address these challenges, Section~\ref{sec:alpha_recourse} derives a fast solver for $V_t(x_t^+)$ by reducing the worker level throttling problem to a service-level convex problem, which can be solved via dual decomposition. Section~\ref{sec:ranked_prefix_controller} then uses this solver within a receding-horizon server controller, restricting the action space to a low-dimensional set of ranked-prefix configurations that can be evaluated over the planning horizon. 

\section{An Efficient Algorithm for Large-Scale $\alpha$-Recourse}
\label{sec:alpha_recourse}
In this section, we exploit the symmetry and separability of (\ref{eq:stage_value_function}) to develop an efficient and highly parallelizable method for evaluating the $\alpha$-recourse value. The resulting solver is used to score candidate server configurations in the receding-horizon controller developed in Section~\ref{sec:ranked_prefix_controller}.

\subsection{Variable Reduction}
\label{sec:variable_reduction}
We first observe that one can exploit symmetry in the objective to reduce the number of decision variables in (\ref{eq:stage_value_function}) from $\mathcal{O}(|W|)$ to $\mathcal{O}(|Q|)$---i.e., we only need to maintain a single throttle variable $\alpha_q$ \emph{per service} instead of per worker. 

\begin{proposition}
    Fix a post-action server state $x_\tau^+$, and suppose that all active workers serving the same service are identical in their throttle threshold, power model, and capacity model, and that service impact depends only on aggregate service capacity. Then, restricting all active workers of each service $q$ to operate at a single common throttle $\alpha_q$ does not change the extended optimal value $V_t(x_t^+)$. Consequently, whenever $V_t(x_t^+) < \infty$, the recourse problem admits an optimal solution in which only one throttle variable is maintained per service instead of per worker. 
\end{proposition}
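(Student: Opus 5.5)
The plan is an averaging (symmetrization) argument based on convexity and concavity. Restricting to common per-service throttles can only increase the optimal value, since it shrinks the feasible set. So it suffices to show that every feasible worker-level throttle vector $\alpha_t$ can be replaced by a per-service-uniform vector $\bar\alpha_t$ that is still feasible and whose objective is no larger. In the extended-value convention, this gives equality of the two optimal values, including the case where both are $+\infty$. Existence of an optimizer for finite $V_t$ follows from compactness of the box $\prod_w[\tau_w,1]$ and continuity of the objective and constraint (the SAA impact is piecewise linear, hence continuous). Taking an optimal $\alpha_t$ and symmetrizing it then gives an optimal per-service solution.

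For the construction, fix $x_t^+$ and, for each service $q$, let $A_q := \{w\in W_q : z^C_{s(w),t}=1\}$ be its active workers. Set $\bar\alpha_q := \frac{1}{|A_q|}\sum_{w\in A_q}\alpha_{w,t}$ and assign $\bar\alpha_q$ to every $w\in A_q$. Inactive workers keep their values: they do not enter power, capacity, or $D_t^\alpha$, so their values are irrelevant. Because the workers in $A_q$ share $\tau_q$, the average stays in $[\tau_q,1]$, so the throttle bounds hold. We then check three things.

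\emph{Power.} The above-threshold power $p_w^{\mathrm{above}}$ is a common convex quadratic on $A_q$. By Jensen, $\sum_{w\in A_q} p^{\mathrm{above}}(\bar\alpha_q) \le \sum_{w\in A_q} p^{\mathrm{above}}(\alpha_{w,t})$. Only the active-worker terms of \eqref{eq:server_power_aggregation} depend on $\alpha$, and $\mathrm{PUE}>0$, so $P_t^{\mathrm{tot}}$ does not increase and the budget constraint is preserved.

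\emph{Capacity.} The function $g^{\mathrm{above}}$ is a common concave function on $A_q$, so Jensen gives $C_{q,t}(\bar\alpha)\ge C_{q,t}(\alpha)$. Since $\mathcal{I}_q$ is nonincreasing and $\theta_q\ge0$, the impact term does not increase.

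\emph{Deviation.} The term $D_t^\alpha$ is linear in the active throttles and is unchanged by averaging. The term $D_t^{\mathrm{server}}$ does not depend on $\alpha$.

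The main subtlety is the meaning of ``identical'' workers. If $\rho_w$ (the core count) differs across a service's workers, plain averaging fails, because capacity is then $\sum_w \rho_w \alpha_w^{\gamma}$ and power scales differently. I would read the hypothesis literally, with identical $(a,b,\tau,\rho,\gamma)$ across the workers of a service, so that equal-weight Jensen applies directly. If per-worker scale factors $\rho_w$ are intended, with power coefficients scaling proportionally to the core count, I would first try a weighted average with weights $\rho_w/\sum_{w'}\rho_{w'}$. Weighted Jensen then handles both power and capacity, but $D_t^\alpha$ is no longer invariant. I would either note that it changes only by the $\epsilon$-weighted term or restrict to the literal identical-worker case, as the statement's assumptions suggest.
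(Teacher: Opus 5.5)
Your proposal is correct and uses the same symmetrization argument as the paper: replace each service's active throttles by their average, then apply Jensen's inequality to the convex power and concave capacity, use that $\mathcal{I}_q$ is nonincreasing, and note that $D_t^\alpha$ is linear. Your version is somewhat more careful than the paper's. You average only over active workers, apply the argument to every feasible point rather than assuming an optimum exists (which settles the $+\infty$ case and gives the value equality directly), and justify the existence of a minimizer by compactness and continuity.
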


\begin{proof}
   Let us consider a single service $q$, with a set of $n$ identical workers $W_q$. Assume there exists some optimal set of throttles for the workers on this service given by $(\alpha_1^\star, \alpha_2^\star, \dots, \alpha_n^\star)$. We can define $\bar{\alpha} = \frac{1}{n}\sum_{i=1}^{n}\alpha_i^\star$
   as the average throttle for the workers in service $q$. Now, we consider the symmetric solution where each of the $n$ workers is set to $\bar{\alpha}$. Observe that the symmetric solution still satisfies each worker's power throttle constraint $\alpha \in [\tau_q, 1)$ because it is a convex combination of feasible throttles.

   By Jensen's inequality, we can see that 
   \begin{equation*}
       n p^\text{above}_q(\bar{\alpha}) \leq \sum_{i=1}^n p_q^\text{above}(\alpha_i^\star)
   \end{equation*}
   and 
   \begin{equation*}
       n g^\text{above}_q(\bar{\alpha}) \geq \sum_{i=1}^n g_q^\text{above}(\alpha_i^\star),
   \end{equation*}
   meaning that the symmetric solution cannot increase worker power consumption or decrease worker capacity. Because the service impact function is assumed to be nonincreasing in the aggregate capacity, impact cannot increase from the symmetric solution. Similarly, by linearity of the deviation term, the symmetric solution does not change the deviation penalty. 

   We conclude that assigning throttle variables within each service symmetrically cannot harm feasibility or optimality. It follows that the $\alpha$-recourse problem has an optimal solution that is constant across active workers of the same service. Consequently, we may replace the worker level throttles $\alpha_w$ by a service level throttle $\alpha_q$ for each service to yield an equivalent reduced problem.  
\end{proof}
\subsection{Dual Decomposition via Bisection}

After reduction, the recourse problem has one throttle variable per service, with services coupled only through the aggregate power budget. 
The resulting problem is a separable convex resource allocation problem with a single coupling constraint, for which Lagrangian and parallel solution methods are well-established \cite{patriksson_survey_2008}.
We specialize this structure by using an outer bisection on the power price multiplier and parallel inner bisections for the service level subproblems. 
We fix an arbitrary post-action server state $x_t^+$ and drop the time index $t$ for notational convenience. Given that we now track throttles at the service level, we can write the capacity for a service $q$ with $n_q$ active workers as $C_q(\alpha_q) = n_q\rho_q(\alpha_q^{\gamma_q} - \tau_q^{\gamma_q})$, and the objective as $f_q(\alpha_q) = \theta_qI_q(C_q(\alpha_q)) +\epsilon n_q(1-\alpha_q)$.
The dynamic power associated with the service is given by $h_q(\alpha_q) = n_q\text{PUE}\big[a_q(\alpha_q^2-\tau_q^2) + b_q(\alpha_q-\tau_q)\big]$.

Because the server states are fixed, we denote all the \emph{fixed} power (from server baseline and transition power) as $P_{\text{fix}}$. We define $\tilde{R} = \hat P^{\text{max}} - P_{\text{fix}}$. The power budget constraint is then $\sum_{q \in Q} h_q(\alpha_q) \leq \tilde{R}$,
where $Q$ is the set of all services. The recourse problem can thus be written as
\begin{equation}
\begin{aligned}
\underset{\alpha_q \in [\tau_q, 1]}{\min} \quad &
\sum_{q \in Q}^{} f_q(\alpha_q) \\
\text{s.t.}\quad
& \sum_{q \in Q} h_q(\alpha_q) \leq \tilde{R}.\\
\end{aligned}
\label{eq:reduced_recourse_problem_primal}
\end{equation}
Problem \eqref{eq:reduced_recourse_problem_primal} is a convex optimization problem with a separable objective and a single coupling constraint, making it amenable to solution via dual decomposition \cite{boyd_distributed_2010}. For a fixed dual variable $\lambda \geq 0$, we define the per-service subproblem
\begin{equation}
    \eta_q(\lambda) = \underset{\alpha_q \in [\tau_q, 1]}{\min} \; f_q(\alpha_q) + \lambda h_q(\alpha_q).
\end{equation}
The dual function is then $y(\lambda) = \sum_{q \in Q} \eta_q(\lambda) - \lambda \tilde{R}$.
Evaluating $y(\lambda)$ reduces to solving $|Q|$ independent scalar minimization problems.

In standard dual decomposition, one would maximize the concave dual function $y(\lambda)$ using projected supergradient ascent. Let $\alpha_q^\star(\lambda)$ denote a primal minimizer of the per-service subproblem for a fixed $\lambda$. A supergradient of $y$ is $s(\lambda) = \sum_{q \in Q} h_q(\alpha_q^\star(\lambda)) - \tilde{R}$.
In our setting, this outer update simplifies substantially. We have only a single dual variable with the induced dynamic power nonincreasing in $\lambda$, and thus the supergradient $s(\lambda)$ is scalar and monotone. Hence, we efficiently solve the dual problem via bisection on $\lambda$.

Moreover, each per-service subproblem $\eta_q(\lambda)$ is one-dimensional and convex, so its minimizer can also be found efficiently with bisection using the sign of the subgradient. Since these subproblems are independent across services, all inner solves can be carried out in parallel. The resulting procedure is summarized in Algorithm~\ref{alg:alpha_dual_decomp}.
\begin{algorithm}[t]
\caption{Dual decomposition with nested bisection for the reduced
$\alpha$-recourse problem.}
\label{alg:alpha_dual_decomp}
\hrule
\vspace{1mm}
\footnotesize
\begin{algorithmic}[1]
\STATEX \textbf{Algorithm 1:} Dual decomposition with nested bisection for $\alpha$-recourse \\
\STATEX \textbf{Input:} Service set $Q$, reduced budget $\tilde{R}$, functions $\{f_q,h_q\}_{q \in Q}$ \\
\STATEX \textbf{Output:} Service throttles $\alpha^\star$

\IF{$\sum_{q \in Q} h_q(1) \le \tilde{R}$} 
    \STATE Return $\alpha_q^\star = 1$ for all $q \in Q$
\ELSIF{$\sum_{q \in Q} h_q(\tau_q) > \tilde{R}$}
    \STATE Return $V_t(x_t^+)=+\infty$
\ENDIF

\STATE Set $\lambda_{\mathrm{lo}} \gets 0$ and choose $\lambda_{\mathrm{hi}}$ such that the induced solution is feasible

\WHILE{$\lambda_{\mathrm{hi}} - \lambda_{\mathrm{lo}} > \varepsilon_\lambda$}
    \STATE $\lambda \gets (\lambda_{\mathrm{lo}} + \lambda_{\mathrm{hi}})/2$
    \FORALL{$q \in Q$ \textbf{in parallel}}
        \STATE Compute
        \[
        \alpha_q^\star(\lambda) \in \arg\min_{\alpha_q \in [\tau_q,1]}
        f_q(\alpha_q) + \lambda h_q(\alpha_q)
        \]
        by bisection on $\alpha_q$
    \ENDFOR
    \IF{$\sum_{q \in Q} h_q(\alpha_q^\star(\lambda)) > \tilde{R}$}
        \STATE $\lambda_{\mathrm{lo}} \gets \lambda$
    \ELSE
        \STATE $\lambda_{\mathrm{hi}} \gets \lambda$
        \STATE Store current $\alpha^\star(\lambda)$
    \ENDIF
\ENDWHILE

\STATE Return last stored feasible solution
\end{algorithmic}
\vspace{1mm}
\hrule
\end{algorithm}
The dual variable $\lambda$ can be interpreted as an internal price signal for coordinating power allocation. Small values of $\lambda$ favor larger throttles to reduce service impact, while larger values penalize power consumption more strongly and favor smaller throttles. 
Algorithm \ref{alg:alpha_dual_decomp} is highly parallelizable and consists of only simple operations. The outer loop updates only a scalar dual variable, while the inner loop solves $|Q|$ independent one-dimensional convex problems via bisection. 

\section{A Structured Receding-Horizon Algorithm for Server Control}
\label{sec:ranked_prefix_controller}

We now present an approach for dynamically selecting server actions. 
The outer server-control problem (\ref{eq:reduced_dynamic_control_problem}) is a large, multi-stage discrete optimization problem. 
At a single time step, allowing arbitrary subsets of currently eligible servers to boot or shut down already yields an exponentially large action space. 
Over a horizon of length $H$, the number of possible server-action sequences grows combinatorially with both the number of eligible servers and the horizon length. 
This is further complicated by the delayed finite-state dynamics: a server action taken now may affect not only the realized power and capacity in the current time step, but also the future set of admissible actions and the future available capacity. 
Moreover, any candidate server trajectory must be evaluated by solving the $\alpha$-recourse problem across the horizon. 
Thus, even after separating the convex recourse layer, exact reduced MPC remains computationally impractical at datacenter scale.

To address this intractability, we develop a fast receding-horizon heuristic that optimizes the current server action using an $H$-step rollout and restricts the search over servers to ranked-prefix configurations. The fast $\alpha$-recourse algorithm is central to this approach, since each candidate configuration is scored based on its induced cost over the planning horizon.  

\subsection{Ranked-Prefix Receding-Horizon Approximation}

To obtain an efficient online server controller, we make two structural approximations. First, instead of optimizing server actions over the entire horizon, we use a one-step receding-horizon rollout. At each time step, we optimize only the current server action, rolling out the server dynamics assuming no further server actuation; server actions are then re-planned at every step. Second, we significantly compress the action space: instead of optimizing over arbitrary binary decisions for each server, we consider only \emph{how many} servers to boot or shut down. 

Let $x_t$ denote the current server state and let $u_t$ be a candidate current action. Applying $u_t$ yields the post-action state $x_t^+ = \phi(x_t,u_t)$.
We then simulate the server dynamics forward over the horizon assuming $\texttt{noop}$ server actions at times $t+1, \ldots, t+H-1$. Let $\{\hat x_{t+r}^+(u_t)\}_{r=0}^{H-1}$ denote the resulting server states. The candidate action $u_t$ is scored by the rollout objective
\begin{equation}
\mathcal{F}_{t}(u_{t})
=
\sum_{r=0}^{H-1} V_{t+r}\!\left(\hat x_{t+r}^+(u_t)\right),
\label{eq:server_rollout_score}
\end{equation}
where $V_{t+r}(\cdot)$ is the $\alpha$-recourse value function at time step $t+r$. 
The online controller applies the current action with the smallest rollout score, and replans at the next time step.

However, even this one-step server lookahead is still too computationally expensive if we search over arbitrary subsets of servers. 
We therefore impose the second aforementioned approximation by restricting the current action to one of three modes. 
At each time step, we only consider booting some number of eligible servers, shutting some number of eligible servers, or taking no action.
We do not consider arbitrary mixed boot/shutdown subsets.

For each type of action, we rank the eligible servers using a cheap static proxy for their marginal value. 
Define the weighted nominal capacity hosted on server $s$ as $\kappa_s = \sum_{w \in W_s} \theta_{q(w)} \, g_w^{\textrm{above}}(1),
$
which measures the weighted (by service priority) capacity contributed by server $s$ when its hosted workers operate at full throttle. 
We then rank servers for shutdown and boot using the ratio between their weighted capacity and baseline power draw: 
\begin{equation}
r_s^{\mathrm{shut}} = \frac{\mathrm{PUE}\, b_s}{\kappa_s},
\label{eq:shutdown_rank_score}
\end{equation}
\begin{equation}
r_s^{\mathrm{boot}} = \frac{\kappa_s}{\mathrm{PUE}\, b_s}.
\label{eq:boot_rank_score}
\end{equation}

Let $s_{(1)}^{\mathrm{shut}}, s_{(2)}^{\mathrm{shut}}, \dots$ and 
$s_{(1)}^{\mathrm{boot}}, s_{(2)}^{\mathrm{boot}}, \dots$ 
denote the eligible shutdown and boot servers ordered from best to worst under these rankings. 
Rather than searching over arbitrary subsets, we restrict attention to \emph{ranked-prefix} actions of the form ``act on the top $k$ servers'' for some depth $k$. 
Thus, for each mode, the current action is parameterized by a single integer $k$.
For each mode $a \in \{\texttt{BOOT},\texttt{SHUT}\}$, let $u_t^a(k)$ denote the action that applies mode $a$ to the top $k$ eligible servers under the corresponding ranking and applies \texttt{noop} to all others. Its rollout score is $\mathcal{F}_t^a(k) = \mathcal{F}_t(u_t^a(k))$ where $k=0$ denotes no action. The controller selects the mode and depth with the smallest rollout score. 

The ranking step is deliberately modular. 
It produces an ordered candidate list cheaply, while the final decision is made using the dynamic rollout score in \eqref{eq:server_rollout_score}, which accounts for delayed server dynamics and the downstream $\alpha$-recourse. 

\subsection{Geometric Depth Search}

Under the ranked-prefix restriction, each action mode $a \in \{\texttt{BOOT},\texttt{SHUT}\}$ induces a one-dimensional search problem over $k \in \{0,\ldots,N_t^a\}$ where $N_t^a$ is the number of eligible servers for mode $a$.
In general, we cannot assume the rollout score $\mathcal{F}_t^a(k)$ to be monotone or unimodal, because the throttle recourse is reoptimized for every prefix. 
For shutdowns, increasing $k$ adds a less favorable server to the shutdown set and removes additional capacity, but it may also create power headroom that permits higher throttles on the remaining workers. 
Booting produces a corresponding tradeoff between restoring capacity and consuming additional power. 

To explore the action space with few rollout evaluations,  we use a coarse-to-fine geometric search over $k$. For each mode $a$, we generate a geometric \emph{probe sequence} $\{k_j^a\}_{j\geq 0}$ with common ratio $\rho>1$, either in the forward direction (for shutdowns)
\begin{equation}
k_{j+1}^a = \min\{N_t^a,\lfloor \rho k_j^a \rfloor\},
\end{equation}
or in the reverse direction (for boots)
\begin{equation}
k_{j+1}^a = \max\{1,\lfloor k_j^a/\rho \rfloor\}.
\end{equation}
The search spans small, medium, and large reconfiguration depths using only logarithmically many rollout evaluations.

During this probing phase, we maintain the best finite score seen so far and terminate when the score has failed to improve for $\nu$ consecutive probes or the eligible list is exhausted, where $\nu$ is a small fixed parameter. Let $k^{a,\star}$ denote the best coarse depth found, and let $k_-^a$ and $k_+^a$ denote the adjacent coarse probes bracketing it. We then refine locally by evaluating uniformly spaced depths in the interval $[k_-^a,k_+^a]$. For $m$ local refinements, we evaluate the additional depths

\begin{equation}
\hspace{-1.0em}\mathcal{R}_t^a = \left\{ \textrm{round} \left( k_{-}^a + \frac{\ell (k_+^a - k_-^a)}{m - 1} \right) \mid \ell = 0, \dots, m-1 \right\}
\end{equation}
after removing endpoints. The selected depth is the one with minimum rollout score among all evaluations.

\section{Experimental Setup}
We evaluate the proposed controller on realistic-scale datacenter instances to assess closed-loop performance and scalability under representative power budget trajectories. 
We additionally study small synthetic instances, where the optimal solution can be computed exactly to compare against our proposed method. 
Across these experiments, we compare against several baseline methods, including throttling only and server only controllers, as well as centralized optimization baselines. 
\subsection{Problem Instances}

\subsubsection{Realistic Instance}
We generate a full datacenter scale instance using the datacenter simulation tools developed in \cite{lin_slasher_2026}. This allows us to generate synthetic workloads calibrated to production statistics. 
The instance comprises a datacenter with $|S|=15360$ servers, $|W|=207674$ workers, and $|Q|=1422$ services. Services are organized into three priority tiers with $\theta \in \{1, 10, 1000\}$ to reflect varying levels of criticality. 
Each service is replicated across a subset of servers, producing a dense overlapping placement that is typical in production environments. 
The worker power model is convex quadratic with $a_q=5.5$ and $b_q=2.75$. The capacity model is concave with all services having $\gamma_q=0.7$, and the $\text{PUE}=1.2$. 
Impact functions are approximated by sampling $N_q=20$ samples from each service's load distribution, which is simulated using the generated workload trace data.
Further details about the generation process can be found in Appendix \ref{appendix:experiment_details_realistic}. We set the transition delay to $K=3$ with a planning horizon of $H=6$ and run for an episode length of $T=30$. In a realistic instance, the length of each time step is $20$ seconds; thus, the controller must produce a decision within $20$ seconds to be implementable. 

\subsubsection{Synthetic Instances}
For validation against exact methods, we additionally generate small datacenter instances with $|S| \in \{3,\dots,7\}$ servers with a similar worker distribution as the realistic instance and $\sim 3|S|$ services with random overlapping placements across three priority tiers. We set the transition delay to $K=2$ with a planning horizon of $H=4$ and run for an episode length of $T=10$. We generate instances using several random seeds to vary service placement and demand realizations. See Appendix~\ref{appendix:experiment_details_realistic} for further details on instance generation.  

\subsection{Controllers and Baselines}

We evaluate six controllers across three different categories to characterize the design space. 
Formal mathematical formulations for all the non-real-time controllers can be found in Appendix \ref{appendix:controller_formulations}.

The following three methods are real-time controllers.

\subsubsection{AlphaOnly}
All servers remain $\texttt{ON}$. Only the per-service throttle $\alpha_q$ is optimized at each time step.

\subsubsection{ServerOnly}
Uses the \emph{RankedPrefix} server planner with all throttles fixed at $\alpha=1$. There is no convex recourse.

\subsubsection{RankedPrefix}
Our method described in Section \ref{sec:ranked_prefix_controller}.

The next two methods are one-step optimization baselines that solve a single step optimization with the same forward rollout as the heuristic methods. 

\subsubsection{GurobiFixedAlpha}
Solves a MILP over server actions with all workers set to $\alpha=1$.

\subsubsection{GurobiMICP}
Jointly optimizes over server actions and per-service throttles $\alpha_q$ by solving a mixed integer convex program (MICP).

Finally, we have an exact oracle method against which we can validate our real-time controllers on small instances. 

\subsubsection{Dynamic Programming (DP) MPC}
The oracle solves the server-only problem using dynamic programming over a time expanded server state graph. The $\alpha$-recourse costs are the node transition costs in the graph. 

\subsection{Scenarios}

Each experiment is characterized by a time-varying power budget trajectory $\{\hat{P}^\text{max}_t\}_{t=0}^{T-1}$. We study three trajectories.

\begin{itemize}
    \item \textbf{Step}: An instantaneous drop at $t=T/4$ that instantaneously recovers at $t=3T/4$.
    \item \textbf{Ramp}: Descends linearly at $t=2T/5$, holds at reduced power until $3T/5$, and recovers linearly at $4T/5$.
    \item \textbf{Oscillate}: A cosine shaped power signal over the first $4T/5$ slots that recovers for the remaining $T/5$ slots.
\end{itemize}

\section{Results}
We evaluate the proposed controller on its closed-loop performance on realistic large-scale instances and computational scalability; we also evaluate its performance relative to exact dynamic programming on small instances.

\subsection{Closed-Loop Performance on Realistic Sizes}


Table~\ref{tab:realtime_summary} compares the three real-time controllers under step, ramp, and oscillatory power-cap trajectories. Across all three, \emph{RankedPrefix} achieves the lowest episode cost while incurring no budget violations and remaining within the 20s planning constraint. 

\begin{figure*}[!t]
    \centering
    \includegraphics[width=0.89\textwidth,trim={0.2cm 0.1cm 0.2cm 0.1cm},clip]{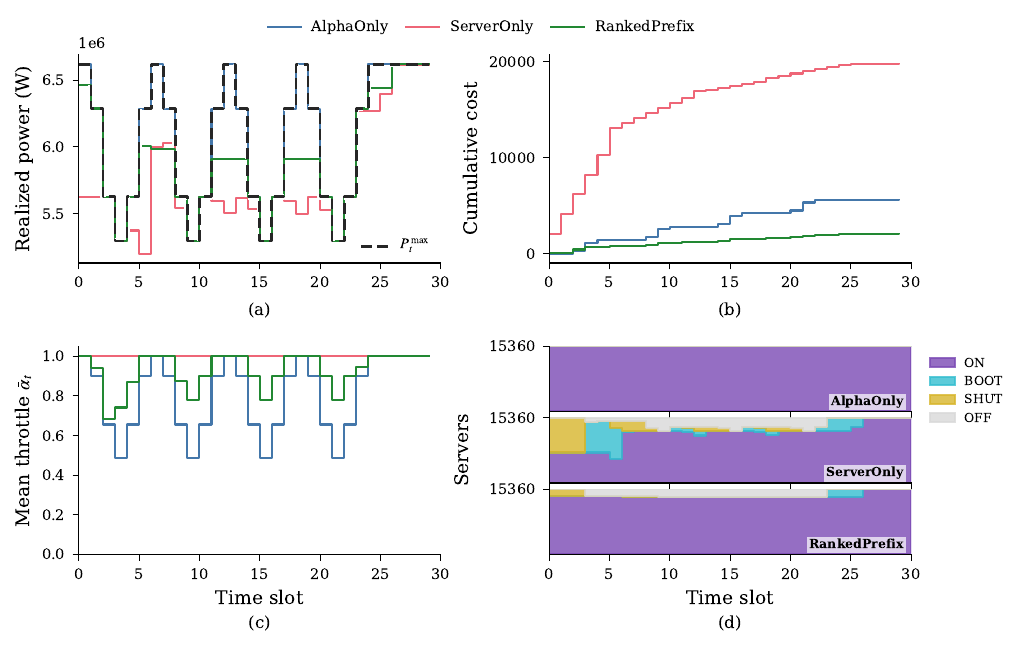}
    \caption{Real-time controller performance on oscillatory power trajectory. (a) \emph{RankedPrefix} satisfies the power budget (black dashed line). (b) Final cumulative cost is $63.7\%$ lower than \emph{AlphaOnly} and $89.7\%$ lower than \emph{ServerOnly}. (c)--(d) Coordinating both mechanisms avoids aggressive throttling and excessive server shutdowns}
    \label{fig:oscillate20}
    \vspace{-1.5em}
\end{figure*}

\begin{table}[t]
\caption{Controller performance across realistic scenarios.}
\label{tab:realtime_summary}
\centering
\footnotesize
\setlength{\tabcolsep}{3.5pt}
\renewcommand{\arraystretch}{1.08}
\begin{tabular}{|c|c|c|c|c|}
\hline
\textbf{Scenario} & \textbf{Controller} & \textbf{Episode Cost} & \textbf{\# Viol.} & \textbf{Max Plan Time (s)} \\
\hline
\multirow{3}{*}{Step}
    & AlphaOnly    & 12,529.6 & \textbf{0} & \textbf{0.0018} \\
    & ServerOnly   & 4,241.8  & \textbf{0} & 1.55 \\
    & RankedPrefix & \textbf{2,761.0} & \textbf{0} & 10.8 \\
\hline
\multirow{3}{*}{Ramp}
    & AlphaOnly    & 152,298.9 & 11 & \textbf{0.0019} \\
    & ServerOnly   & 8,502.0   & \textbf{0} & 1.51 \\
    & RankedPrefix & \textbf{6,683.2} & \textbf{0} & 18.4 \\
\hline
\multirow{3}{*}{Oscillate}
    & AlphaOnly    & 5,620.5  & 8 & \textbf{0.0017} \\
    & ServerOnly   & 19,733.3 & \textbf{0} & 1.58 \\
    & RankedPrefix & \textbf{2,037.6} & \textbf{0} & 14.8 \\
\hline
\end{tabular}
\end{table}

Fig.~\ref{fig:oscillate20} illustrates the qualitative behavior of various real-time controllers on an oscillatory power trajectory. \emph{AlphaOnly} relies on aggressive throttling during low-cap intervals, which comes at the cost of service capacity. With no fast recourse, \emph{ServerOnly} instead shuts down many servers, sacrificing capacity in order to wait for delayed power savings. \emph{RankedPrefix} strikes a balance between both mechanisms by combining fewer shutdowns with moderate throttling to achieve the lowest cumulative cost. 




\subsection{Computational Scalability and Empirical Optimality}

\begin{figure*}[t]
    \centering
    \includegraphics[width=0.89\textwidth]{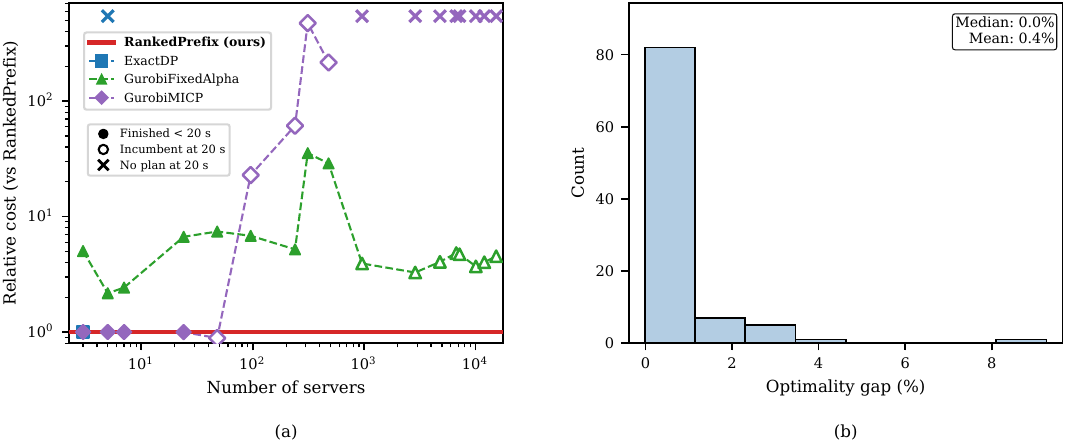}
    \caption{Scalability and comparison with exact MPC. (a) \emph{RankedPrefix} finds low-cost plans at realistic scale under a 20s per-step planning limit. Episode costs are normalized by \emph{RankedPrefix}. Filled markers indicate completion within the limit, open markers indicate a feasible incumbent within the limit, and $\times$ indicates no feasible plan found. (b) On 100 small instances, \emph{RankedPrefix} closely matches exact MPC, with median and mean episode cost gaps of $0.0\%$ and $0.4\%$, respectively. }
    \label{fig:scaling_panel}
    \vspace{-1.5em}
\end{figure*}



Fig.~\ref{fig:scaling_panel}a reports episode cost relative to \emph{RankedPrefix} under a $20$s per-step solve time limit. \emph{ExactDP} is tractable only for very small instances, while \emph{GurobiMICP} is competitive at small sizes but returns poor or no incumbents as the problem grows. 
\emph{GurobiFixedAlpha} scales further but incurs higher cost relative to \emph{RankedPrefix} as it omits throttling recourse. Our \emph{RankedPrefix} controller benefits from coordinating server reconfiguration with throttling recourse to scale beyond the centralized optimization baselines. Variation across server sizes reflects the independently generated workload mixes. 



Fig.~\ref{fig:scaling_panel}c evaluates optimality on 100 random small instances with varying topologies, service placements, and power-cap trajectories where dynamic programming is tractable. The distribution of optimality gaps is concentrated near 0, showing that \emph{RankedPrefix} is usually either optimal or near-optimal on small instances where exact comparison is possible. 




\section{Conclusion}
Datacenter power modulation is fundamentally a multi-timescale control problem. Operational power flexibility is distributed across many interacting layers and actuated by different control mechanisms. In this work, we have considered fast worker throttling and slower server reconfiguration. Our results show that treating either mechanism in isolation leads to poor closed-loop behavior under time-varying power limits relative to a controller that combines both. 

We propose a receding-horizon controller that coordinates these mechanisms while remaining tractable at datacenter scale. For fixed server states, the throttling problem becomes a service-level convex recourse problem that can be solved efficiently via dual decomposition. We develop a ranked-prefix approximation, where structured server actions are evaluated using the downstream recourse value over the forecast horizon. This yields an online controller for modulating datacenter power without having to solve a mixed-integer dynamic optimization, which is intractable at scale. 

Across realistic power-cap trajectories, our proposed controller reduces cost relative to throttling-only and server-only baselines while satisfying power limits in real time, with planning times below the 20s control interval. Scaling experiments show that our method is tractable at realistic datacenter sizes, and evaluation across a variety of smaller instances indicates its performance typically approaches that of exact MPC.

There are many possible avenues for future work: for example, there remains an opportunity to extend our proposed framework to richer datacenter models, including heterogeneous accelerator workloads, different service classes, and more detailed performance models, as part of a broader approach to datacenter operational power flexibility. 

\appendices

\section{Baseline and Oracle Formulations}
\label{appendix:controller_formulations}

\subsection{Exact MPC via Dynamic Programming}

For small instances, we can solve the reduced finite horizon server control problem in (\ref{eq:reduced_dynamic_control_problem}) exactly via dynamic programming. Let $\mathcal{X}$ denote the finite set of feasible server state vectors, and let $\mathcal{U}(x)$ denote the set of admissible server action vectors from a state $x \in \mathcal{X}$. For a fixed solve over the horizon at time $t$, we define $J^\star_r(x)$ as the optimal remaining cost from a pre-action server state $x$ at a horizon step $r$. This corresponds to the physical time $t+r$. We define the terminal condition as $J_H^\star(x) = 0$ for all $x \in \mathcal{X}$. 

The Bellman recursion for all horizon steps $r=0,\dots,H-1$ is

\begin{equation}
J_r^\star(x)
=
\underset{u \in \mathcal{U}(x)}{\textrm{minimize}} \quad 
V_{t+r}(\phi(x,u))
+
J_{r+1}^\star(\psi(\phi(x,u)))
,
\label{eq:dp_bellman}
\end{equation}
 where the stage cost $V_{t+r}(\phi(x,u))$ is the $\alpha$-recourse value. 

\subsection{Optimal One-Step Fixed Throttle Controller}
\label{appendix:gurobi_fixed_alpha}

We describe the optimal one-step fixed throttle controller \emph{GurobiFixedAlpha}. 
Let $x_t$ denote the current server state, and let $\mathcal{U}(x_{s,t})$ denote the admissible actions for server $s$. For each server $s$ and admissible action $a \in \mathcal{U}(x_{s,t})$, we introduce a binary variable $y_{s,a}$ satisfying $\sum_{a \in \mathcal{U}(x_{s,t})} y_{s,a} = 1$ for all $s \in \mathcal{S}$.

For each action $a$ and rollout step $r=0, \ldots, H-1$, precompute the induced indicators $\bar{z}^P_{s,r}(a)$ and $\bar{z}^C_{s,r}(a)$. The rollout indicators and capacity are

\begin{equation}
\begin{aligned}
z^{P}_{s,t+r}
&=
\sum_{a\in\mathcal{U}(x_{s,t})}
\bar z^{P}_{s,r}(a)\,y_{s,a},
\\
z^{C}_{s,t+r}
&=
\sum_{a\in\mathcal{U}(x_{s,t})}
\bar z^{C}_{s,r}(a)\,y_{s,a},
\end{aligned}
\label{eq:rollout_indicators}
\end{equation}

\begin{equation}
C_{q,t+r}
=
\sum_{w \in W_q} z^{C}_{s(w),t+r}\, g_w^{\textrm{above}}(1).
\label{eq:fixedalpha_capacity}
\end{equation}

Let $P^{\textrm{ovh}}_{t+r} = B + \sum_{s \in \mathcal{S}}
\bigl[
z^{P}_{s,t+r} b_s
+
\left(z^{P}_{s,t+r}-z^{C}_{s,t+r}\right) P_{\mathrm{tr},s}
\bigr]$ denote the server state overhead power. The total datacenter power is then $P^{\mathrm{tot}}_{t+r} = \mathrm{PUE}\left(P^{\textrm{ovh}}_{t+r} + \sum_{s \in \mathcal{S}} z^{C}_{s,t+r} \sum_{w \in \mathcal{W}_s} p_w^{\textrm{above}}(1)\right)$.

Introducing epigraph variables $\xi_{q,j,r}$ for the sampled expected-shortfall terms yields

\begin{equation}
\begin{aligned}
\underset{\{y_{s,a}\}, \{\xi_{q,j,r}\}}{\textrm{minimize}} \quad &
\sum_{r=0}^{H-1} \bigg[\sum_{q\in Q} \frac{\theta_q}{N\bar{l}_q} \sum_{j=1}^{N} \xi_{q,j,r} +  \varepsilon D(\mathbf{1}, z^{P}_{t+r}, z^{C}_{t+r}) \bigg] \\
\text{s.t.}\quad
& \sum_{a \in \mathcal{U}(x_{s,t})} y_{s,a}=1, \quad \forall s \in \mathcal{S}, \\
& P_{t+r}^{\mathrm{tot}} \leq \hat P_{t+r}^{\max}, \quad r=0, \dots, H-1, \\
& \xi_{q,j,r} \geq 0, \quad  \xi_{q,j,r} \geq l^{(j)}_{q,t+r} - C_{q,t+r}, \\
& \hspace{4em} \forall q, j=1,\dots,N, r=0,\dots, H-1, \\
\end{aligned}
\label{eq:gurobi_fixed_alpha}
\end{equation}

Each time we solve (\ref{eq:gurobi_fixed_alpha}), we apply only the current optimal server action and then re-plan at the next time step. 

\subsection{Optimal One-Step Mixed Integer Convex Controller}
We describe the optimal one-step mixed integer convex controller \emph{GurobiMICP}. We use the same current action binaries $y_{s,a}$ and induced rollout indicators $z^P_{s,t+r}$ and $z^C_{s,t+r}$ defined in Appendix~\ref{appendix:gurobi_fixed_alpha}. Unlike \emph{GurobiFixedAlpha}, we now also optimize the service-level throttles over the rollout.

Using the variable reduction from Section~\ref{sec:variable_reduction}, we maintain one throttle $\alpha_{q,t+r} \in [\tau_q, 1]$ for each service $q$ and rollout step $r=0,\dots,H-1$. For each service $q$, server $s$, and rollout step $r$, let $n_{q,s}$ denote the number of workers of service $q$ hosted on server $s$. Because we must also optimize over throttles in this case, we introduce an auxiliary variable $u_{q,s,t+r}$ which represents the realized \emph{above-threshold} throttle on server $s$. We will want to enforce that $u_{q,s,t+r}=(\alpha_{q,t+r}-\tau_q)z^C_{s,t+r}$. We also add auxiliary variables $p_{q,s,t+r}$ and $c_{q,s,t+r}$ for the worker power and worker capacity contributed by service $q$ on server $s$ at time $t+r$. 

The realized capacity of service $q$ at time $t+r$ is given by $C_{q,t+r} = \sum_{s \in \mathcal{S}} c_{q,s,t+r}$.
The total power at time $t+r$ is then $P^{\mathrm{tot}}_{t+r} = \mathrm{PUE}\left(P^{\textrm{ovh}}_{t+r} + \sum_{q \in \mathcal{Q}} \sum_{s \in \mathcal{S}} p_{q,s,t+r}\right)$.

Using the same linearization of the expected shortfall objective, we can write the one-step mixed integer convex controller as follows. Let $\mathcal{V}$ denote the collection of decision variables $\{y_{s,a}\},\{\alpha_{q,t+r}\},\{u_{q,s,t+r}\},
\{p_{q,s,t+r}\},\{c_{q,s,t+r}\},\{\xi_{q,j,r}\}$.
\begingroup
\allowdisplaybreaks[4]
\begin{align}
\underset{\mathcal{V}}{\textrm{minimize}} \quad &
\sum_{r=0}^{H-1}
\bigg[
\sum_{q\in\mathcal{Q}}
\frac{\theta_q}{N\bar{l}_q}
\sum_{j=1}^{N}\xi_{q,j,r}
+
\epsilon D(\alpha_{t+r},z^P_{t+r},z^C_{t+r})
\bigg]
\notag\\
\text{s.t.}\quad
&
\sum_{a\in\mathcal{U}_s(x_{s,t})}y_{s,a}=1,
\qquad \forall s\in\mathcal{S},
\notag\\
&
\tau_q\leq\alpha_{q,t+r}\leq1,
\qquad
\forall q\in\mathcal{Q},\ r=0,\ldots,H-1,
\notag\\
&
0\leq u_{q,s,t+r}
\leq(1-\tau_q)z^C_{s,t+r},
\notag\\
&
u_{q,s,t+r}
\leq\alpha_{q,t+r}-\tau_q,
\notag\\
&
u_{q,s,t+r}
\geq
\alpha_{q,t+r}-\tau_q
-(1-\tau_q)(1-z^C_{s,t+r}),
\notag\\
&
p_{q,s,t+r}
\geq
n_{q,s}\,
p_q^{\textrm{above}}(\tau_q+u_{q,s,t+r}),
\notag\\
&
c_{q,s,t+r}
\leq
n_{q,s}\,
g_q^{\textrm{above}}(\tau_q+u_{q,s,t+r}),
\notag\\
&
\hspace{2em}
\forall q\in\mathcal{Q},\
s\in\mathcal{S},\
r=0,\ldots,H-1,
\notag\\
&
\xi_{q,j,r}\geq0,
\qquad
\xi_{q,j,r}
\geq l_{q,t+r}^{(j)}-C_{q,t+r},
\notag\\
&
\hspace{2em}
\forall q\in\mathcal{Q},\
j=1,\ldots,N,\
r=0,\ldots,H-1,
\notag\\
&
P_{t+r}^{\mathrm{tot}}
\leq\hat P_{t+r}^{\max},
\qquad r=0,\ldots,H-1.
\label{eq:gurobi_micp}
\end{align}
\endgroup
where constraints (3)-(5) allow us to linearize the bilinear constraint $u_{q,s,t+r}=(\alpha_{q,t+r}-\tau_q)z^C_{s,t+r}$, and constraints (6) and (7) constrain the power and capacity. As with \emph{GurobiFixedAlpha}, each time we solve (\ref{eq:gurobi_micp}), we apply on the current optimal action and then re-plan at the next time step.

\section{Details on Instance Generation}
\label{appendix:experiment_details_realistic}

The simulator from \cite{lin_slasher_2026} produces a synthetic datacenter instance through a hierarchical, power-driven procedure. Given a power budget and a catalog of server SKUs, servers are iteratively sampled and placed onto a tile-based floor plan until the allocated power is met. Service are sampled from a heavy-tailed priority distribution, and each service spawns workers until a target core utilization is reached. Workers are assigned to servers using a configurable allocator that respects per-server core capacity. Per-worker CPU utilization time series are synthesized from a flat baseline, daily and weekly seasonalities, and Gaussian noise, reproducing the patterns characteristic of production workloads. Each worker's time series is aggregated into a per-service histogram over utilization (load), used to sample loads for the impact function approximation.

Small synthetic instances used for validation against exact methods are generated procedurally rather than from traces. For a target server count $|S|$, we instantiate a single datacenter populated by servers of fixed base power and core count. Services are sampled across three priority tiers, with the number of services in each tier scaling with $|S|$ and biased toward the low-priority tier to mirror the heavy-tailed composition of production workloads. Each service is placed on a uniformly random subset of servers whose size is drawn from a tier-dependent range, producing the dense overlapping placement characteristic of production deployments. Per-service demand samples are drawn from a Gaussian distribution, clipped at zero. Random seeds vary both the service placement and the demand realizations across instances. 

\bibliographystyle{IEEEtran}
\bibliography{references_2}

@techreport{larson_utility_2024,
    author      = {{EPRI}},
    title       = {Utility Experiences and Trends Regarding Data Centers: 2024 Survey},
    institution = {EPRI},
    address     = {Palo Alto, CA, USA},
    number      = {3002030643},
    month       = sep,
    year        = {2024},
}

@inproceedings{li_thunderbolt_2020,
	title = {Thunderbolt: {Throughput}-{Optimized}, {Quality}-of-{Service}-{Aware} {Power} {Capping} at {Scale}},
	isbn = {ISBN 978-1-939133-19-9},
	url = {},
	booktitle = {OSDI},
	author = {Li, Shaohong and Wang, Xi and Zhang, Xiao and Kontorinis, Vasileios and Kodakara, Sreekumar and Lo, David and Ranganathan, Parthasarathy},
	month = nov,
	year = {2020},
}

@inproceedings{li_scalable_2019,
	title = {A {Scalable} {Priority}-{Aware} {Approach} to {Managing} {Data} {Center} {Server} {Power}},
	url = {},
	doi = {10.1109/HPCA.2019.00067},
	urldate = {2026-05-13},
	booktitle = {HPCA},
	author = {Li, Yang and Lefurgy, Charles R. and Rajamani, Karthick and Allen-Ware, Malcolm S. and Silva, Guillermo J. and Heimsoth, Daniel D. and Ghose, Saugata and Mutlu, Onur},
	month = feb,
	year = {2019},
}

@inproceedings{urgaonkar_dynamic_2005,
	title = {Dynamic {Provisioning} of {Multi}-tier {Internet} {Applications}},
	url = {},
	doi = {10.1109/ICAC.2005.27},
	urldate = {2026-05-08},
	booktitle = {ICAC},
	author = {Urgaonkar, B. and Shenoy, P. and Chandra, A. and Goyal, P.},
	month = jun,
	year = {2005},
	pages = {},
}

@article{gandhi_autoscale_2012,
	title = {{AutoScale}: {Dynamic}, {Robust} {Capacity} {Management} for {Multi}-{Tier} {Data} {Centers}},
	volume = {30},
	issn = {0734-2071, 1557-7333},
	shorttitle = {{AutoScale}},
	url = {},
	doi = {10.1145/2382553.2382556},
	language = {en},
	number = {4},
	urldate = {2026-05-08},
	journal = {ACM Transactions on Computer Systems},
	author = {Gandhi, Anshul and Harchol-Balter, Mor and Raghunathan, Ram and Kozuch, Michael A.},
	month = {},
	year = {2012},
	pages = {},
}

@inproceedings{albers_algorithms_2021,
	title = {Algorithms for {Right}-{Sizing} {Heterogeneous} {Data} {Centers}},
	isbn = {9781450380706},
	url = {},
	doi = {10.1145/3409964.3461789},
	language = {en},
	urldate = {2026-05-08},
	booktitle = {SPAA},
	publisher = {ACM},
	author = {Albers, Susanne and Quedenfeld, Jens},
	month = jul,
	year = {2021},
}

@article{lin_dynamic_2013,
	title = {Dynamic {Right}-{Sizing} for {Power}-{Proportional} {Data} {Centers}},
	volume = {21},
	issn = {1558-2566},
	url = {},
	doi = {10.1109/TNET.2012.2226216},
	number = {5},
	urldate = {2026-05-08},
	journal = {IEEE/ACM Transactions on Networking},
	author = {Lin, Minghong and Wierman, Adam and Andrew, Lachlan L. H. and Thereska, Eno},
	month = oct,
	year = {2013},
	pages = {1378--1391},
}

@inproceedings{verma_large-scale_2015,
	title = {Large-scale cluster management at {Google} with {Borg}},
	isbn = {9781450332385},
	url = {},
	doi = {10.1145/2741948.2741964},
	language = {en},
	urldate = {2026-05-08},
	booktitle = {EuroSys},
	publisher = {ACM},
	author = {Verma, Abhishek and Pedrosa, Luis and Korupolu, Madhukar and Oppenheimer, David and Tune, Eric and Wilkes, John},
	month = apr,
	year = {2015},
}

@article{marcucci_warm_2021,
	title = {Warm {Start} of {Mixed}-{Integer} {Programs} for {Model} {Predictive} {Control} of {Hybrid} {Systems}},
	volume = {66},
	issn = {1558-2523},
	url = {},
	doi = {10.1109/TAC.2020.3007688},
	number = {6},
	urldate = {2026-04-30},
	journal = {IEEE Transactions on Automatic Control},
	author = {Marcucci, Tobia and Tedrake, Russ},
	month = jun,
	year = {2021},
	pages = {2433--2448},
}

@inproceedings{kanev_tradeoffs_2014,
	title = {Tradeoffs between power management and tail latency in warehouse-scale applications},
	url = {},
	doi = {10.1109/IISWC.2014.6983037},
	urldate = {2026-05-07},
	booktitle = {IISWC},
	author = {Kanev, Svilen and Hazelwood, Kim and Wei, Gu-Yeon and Brooks, David},
	month = oct,
	year = {2014},
}

@inproceedings{qiu_revisiting_2025,
	title = {Revisiting {CPU} {Performance} {Scaling} for {Energy}-{Efficient} {Packet} {Processing} {Applications}},
	isbn = {9798400711251},
	url = {},
	doi = {10.1145/3679240.3734588},
	language = {en},
	urldate = {2026-05-07},
	booktitle = {ACM e-Energy},
	publisher = {ACM},
	author = {Qiu, Biqing and Chen, Yixi and Khooi, Xin Zhe and Song, Cha Hwan and Chan, Mun Choon},
	month = jun,
	year = {2025},
}

@techreport{norris_rethinking_2025,
	title = {Rethinking {Load} {Growth}: {Assessing} the {Potential} for {Integration} of {Large} {Flexible} {Loads} in {US} {Power} {Systems}},
	institution = {Nicholas Institute for Energy, Environment, and Sustainability},
	author = {Norris, Tyler H. and Profeta, Tim and Patino-Echeverri, Dalia and Cowie-Haskell, Adam},
	year = {2025},
}

@book{borrelli_predictive_2017,
	title = {Predictive {Control} for {Linear} and {Hybrid} {Systems}},
	publisher = {Cambridge University Press},
	author = {Borrelli, Francesco and Bemporad, Alberto and Morari, Manfred},
	year = {2017},
}

@article{jahanshahi_powermorph_2022,
	title = {{PowerMorph}: {QoS}-{Aware} {Server} {Power} {Reshaping} for {Data} {Center} {Regulation} {Service}},
	volume = {19},
	issn = {1544-3566, 1544-3973},
	shorttitle = {{PowerMorph}},
	url = {},
	doi = {10.1145/3524129},
	language = {en},
	number = {3},
	urldate = {2026-04-09},
	journal = {ACM Transactions on Architecture and Code Optimization},
	author = {Jahanshahi, Ali and Yu, Nanpeng and Wong, Daniel},
	month = sep,
	year = {2022},
}

@techreport{blanford_powering_2026,
	title = {Powering {Intelligence} 2026: {Updated} {Scenarios} of {U}.{S}. {Data} {Center} {Electricity} {Use} and {Power} {Strategies}},
	url = {},
	number = {3002034696},
	urldate = {2026-04-06},
	institution = {EPRI},
	author = {Blanford, Geoff and Wilson, Tom and Bistline, John and Johnson, Nils},
	month = feb,
	year = {2026},
}

@inproceedings{savasci_pads_2024,
	title = {{PADS}: {Power} {Budgeting} with {Diagonal} {Scaling} for {Performance}-{Aware} {Cloud} {Workloads}},
	copyright = {https://doi.org/10.15223/policy-029},
	isbn = {9798331507862},
	shorttitle = {{PADS}},
	url = {},
	doi = {10.1109/IGSC64514.2024.00012},
	urldate = {2026-04-09},
	booktitle = {IGSC},
	publisher = {IEEE},
	author = {Savasci, Mehmet and Souza, Abel and Irwin, David and Ali-Eldin, Ahmed and Shenoy, Prashant},
	month = nov,
	year = {2024},
	pages = {},
}

@article{chiang_layering_2007,
	title = {Layering as {Optimization} {Decomposition}: {A} {Mathematical} {Theory} of {Network} {Architectures}},
	volume = {95},
	issn = {1558-2256},
	shorttitle = {Layering as {Optimization} {Decomposition}},
	url = {},
	doi = {10.1109/JPROC.2006.887322},
	number = {1},
	urldate = {2026-04-30},
	journal = {Proceedings of the IEEE},
	author = {Chiang, Mung and Low, Steven H. and Calderbank, A. Robert and Doyle, John C.},
	month = jan,
	year = {2007},
}

@article{wang_frequency_2019,
	title = {Frequency regulation service provision in data center with computational flexibility},
	volume = {251},
	issn = {03062619},
	url = {},
	doi = {10.1016/j.apenergy.2019.05.107},
	language = {en},
	urldate = {2026-04-30},
	journal = {Applied Energy},
	author = {Wang, Wei and Abdolrashidi, Amirali and Yu, Nanpeng and Wong, Daniel},
	month = oct,
	year = {2019},
	pages = {113304},
}

@article{hager_exploring_2016,
	title = {Exploring performance and power properties of modern multi‐core chips via simple machine models},
	volume = {28},
	copyright = {http://onlinelibrary.wiley.com/termsAndConditions\#vor},
	issn = {1532-0626, 1532-0634},
	url = {},
	doi = {10.1002/cpe.3180},
	language = {en},
	number = {2},
	urldate = {2026-04-22},
	journal = {Concurrency and Computation: Practice and Experience},
	author = {Hager, Georg and Treibig, Jan and Habich, Johannes and Wellein, Gerhard},
	month = feb,
	year = {2016},
	pages = {189--210},
}

@inproceedings{wu_dynamo_2016,
	title = {Dynamo: {Facebook}'s {Data} {Center}-{Wide} {Power} {Management} {System}},
	isbn = {9781467389471},
	shorttitle = {Dynamo},
	url = {},
	doi = {10.1109/ISCA.2016.48},
	urldate = {2026-04-09},
	booktitle = {ISCA},
	publisher = {IEEE},
	author = {Wu, Qiang and Deng, Qingyuan and Ganesh, Lakshmi and Hsu, Chang-Hong and Jin, Yun and Kumar, Sanjeev and Li, Bin and Meza, Justin and Song, Yee Jiun},
	month = jun,
	year = {2016},
}

@inproceedings{ghamkhari_data_2012,
	title = {Data centers to offer ancillary services},
	url = {},
	doi = {10.1109/SmartGridComm.2012.6486023},
	urldate = {2026-04-30},
	booktitle = {SmartGridComm},
	author = {Ghamkhari, Mahdi and Mohsenian-Rad, Hamed},
	month = nov,
	year = {2012},
}

@inproceedings{sakalkar_data_2020,
	title = {Data {Center} {Power} {Oversubscription} with a {Medium} {Voltage} {Power} {Plane} and {Priority}-{Aware} {Capping}},
	isbn = {9781450371025},
	url = {},
	doi = {10.1145/3373376.3378533},
	language = {en},
	urldate = {2026-04-09},
	booktitle = {ASPLOS},
	publisher = {ACM},
	author = {Sakalkar, Varun and Kontorinis, Vasileios and Landhuis, David and Li, Shaohong and De Ronde, Darren and Blooming, Thomas and Ramesh, Anand and Kennedy, James and Malone, Christopher and Clidaras, Jimmy and Ranganathan, Parthasarathy},
	month = {},
	year = {2020},
	pages = {},
}

@inproceedings{patel_characterizing_2024,
	title = {Characterizing {Power} {Management} {Opportunities} for {LLMs} in the {Cloud}},
	isbn = {9798400703867},
	url = {},
	doi = {10.1145/3620666.3651329},
	language = {en},
	urldate = {2026-05-07},
	booktitle = {ASPLOS},
	publisher = {ACM},
	author = {Patel, Pratyush and Choukse, Esha and Zhang, Chaojie and Goiri, Inigo and Warrier, Brijesh and Mahalingam, Nithish and Bianchini, Ricardo},
	month = apr,
	year = {2024},
}

@misc{radovanovic_carbon-aware_2021,
	title = {Carbon-{Aware} {Computing} for {Datacenters}},
	url = {},
	doi = {10.48550/arXiv.2106.11750},
	urldate = {2026-04-30},
	publisher = {arXiv},
	author = {Radovanovic, Ana and Koningstein, Ross and Schneider, Ian and Chen, Bokan and Duarte, Alexandre and Roy, Binz and Xiao, Diyue and Haridasan, Maya and Hung, Patrick and Care, Nick and Talukdar, Saurav and Mullen, Eric and Smith, Kendal and Cottman, MariEllen and Cirne, Walfredo},
	month = jun,
	year = {2021},
	note = {arXiv:2106.11750},
}

@article{hall_carbon-aware_2025,
	title = {Carbon-{Aware} {Computing} for {Data} {Centers} with {Probabilistic} {Performance} {Guarantees}},
	issn = {1558-0679},
	url = {},
	doi = {10.1109/TPWRS.2025.3630499},
	urldate = {2026-04-30},
	journal = {IEEE Transactions on Power Systems},
	author = {Hall, Sophie and Micheli, Francesco and Belgioioso, Giuseppe and Radovanović, Ana and Dörfler, Florian},
	year = {2025},
}

@article{skrjanc_systematic_2023,
	title = {A systematic literature review on under-frequency load shedding protection using clustering methods},
	volume = {180},
	issn = {13640321},
	url = {},
	doi = {10.1016/j.rser.2023.113294},
	language = {en},
	urldate = {2026-04-22},
	journal = {Renewable and Sustainable Energy Reviews},
	author = {Skrjanc, T. and Mihalic, R. and Rudez, U.},
	year = {2023},
	pages = {113294},
}

@inproceedings{hespanhol_structure_2019,
	title = {A {Structure} {Exploiting} {Branch}-and-{Bound} {Algorithm} for {Mixed}-{Integer} {Model} {Predictive} {Control}},
	isbn = {9783907144008},
	url = {},
	doi = {10.23919/ECC.2019.8796242},
	urldate = {2026-04-30},
	booktitle = {ECC},
	publisher = {IEEE},
	author = {Hespanhol, Pedro and Quirynen, Rien and Di Cairano, Stefano},
	month = jun,
	year = {2019},
}

@incollection{lindbergGuideReducingCarbon2021,
  title = {A {{Guide}} to {{Reducing Carbon Emissions}} through {{Data Center Geographical Load Shifting}}},
  booktitle = {ACM e-Energy},
  author = {Lindberg, Julia and Abdennadher, Yasmine and Chen, Jiaqi and Lesieutre, Bernard C. and Roald, Line},
  year = 2021,
  month = jun,
  doi = {10.1145/3447555.3466582}
}

@article{lechowiczLearningAugmentedCompetitiveAlgorithms2025b,
  title = {Learning-{{Augmented Competitive Algorithms}} for {{Spatiotemporal Online Allocation}} with {{Deadline Constraints}}},
  author = {Lechowicz, Adam and Christianson, Nicolas and Sun, Bo and Bashir, Noman and Hajiesmaili, Mohammad and Wierman, Adam and Shenoy, Prashant},
  year = 2025,
  month = mar,
  journal = {SIGMETRICS},
  volume = {9},
  number = {1},
  publisher = {Association for Computing Machinery},
  doi = {10.1145/3711701}
}

@misc{lin_slasher_2026,
	title = {Slasher: {Power} {Flexibility} for {Cloud} {Datacenters}},
	shorttitle = {Slasher},
	doi = {10.48550/arXiv.2608.26021},
	urldate = {2026-08-27},
	publisher = {arXiv},
	author = {Lin, Liuzixuan and Kazhamiaka, Fiodar and Kumbhare, Alok Gautam and Zhang, Chaojie and Wang, Jaylen and Khan, Hassan and Assis, Rodrigo L. and Rodrigues, Mariana and Woolcock, Kyle and Mahalingam, Nithish and Warrier, Brijesh and Fonseca, Rodrigo and Bianchini, Ricardo},
	month = aug,
	year = {2026},
	note = {arXiv:2608.26021 [cs.DC]},
}

@article{palomar_tutorial_2006,
	title = {A tutorial on decomposition methods for network utility maximization},
	volume = {24},
	copyright = {https://ieeexplore.ieee.org/Xplorehelp/downloads/license-information/IEEE.html},
	issn = {0733-8716},
	doi = {10.1109/JSAC.2006.879350},
	number = {8},
	urldate = {2026-09-01},
	journal = {IEEE Journal on Selected Areas in Communications},
	author = {Palomar, D.P. and {Mung Chiang}},
	month = aug,
	year = {2006},
	pages = {1439--1451},
}

@article{takapoui_simple_2020,
	title = {A simple effective heuristic for embedded mixed-integer quadratic programming},
	volume = {93},
	issn = {0020-7179, 1366-5820},
	doi = {10.1080/00207179.2017.1316016},
	language = {en},
	number = {1},
	urldate = {2026-09-01},
	journal = {International Journal of Control},
	author = {Takapoui, Reza and Moehle, Nicholas and Boyd, Stephen and Bemporad, Alberto},
	month = jan,
	year = {2020},
	pages = {2--12},
}

@inproceedings{chen_dynamic_2013,
	title = {Dynamic server power capping for enabling data center participation in power markets},
	issn = {1558-2434},
	doi = {10.1109/ICCAD.2013.6691107},
	urldate = {2026-09-01},
	booktitle = {ICCAD},
	author = {Chen, Hao and Hankendi, Can and Caramanis, Michael C. and Coskun, Ayse K.},
	month = nov,
	year = {2013},
}

@inproceedings{badiei_diba_2016,
	title = {{DiBA}: {Distributed} {Power} {Budget} {Allocation} for {Large}-{Scale} {Computing} {Clusters}},
	shorttitle = {{DiBA}},
	doi = {10.1109/CCGrid.2016.101},
	urldate = {2026-09-01},
	booktitle = {CCGrid},
	author = {Badiei, Masoud and Zhan, Xin and Azimi, Reza and Reda, Sherief and Li, Na},
	month = may,
	year = {2016},
}

@article{boyd_distributed_2010,
	title = {Distributed {Optimization} and {Statistical} {Learning} via the {Alternating} {Direction} {Method} of {Multipliers}},
	volume = {3},
	issn = {1935-8237, 1935-8245},
	doi = {10.1561/2200000016},
	language = {en},
	number = {1},
	urldate = {2025-07-17},
	journal = {Foundations and Trends® in Machine Learning},
	author = {Boyd, Stephen and Parikh, Neal and Chu, Eric and Peleato, Borja and Eckstein, Jonathan},
	year = {2010},
	pages = {1--122},
}

@article{patriksson_survey_2008,
	title = {A survey on the continuous nonlinear resource allocation problem},
	volume = {185},
	issn = {0377-2217},
	doi = {https://doi.org/10.1016/j.ejor.2006.12.006},
	number = {1},
	journal = {European Journal of Operational Research},
	author = {Patriksson, Michael},
	year = {2008},
	pages = {1--46},
}

@book{european_commission_flexibility_service,
	title = {Flexibility as a service},
	doi = {doi/10.2926/6413481},
	publisher = {Publications Office of the European Union},
    author = {{European Commission} and {Directorate-General for Energy} and {European Climate, Infrastructure and Environment Executive Agency} and Volpe, R. and Bernstrauch, B. and Stuxberg, A. and Bianchi, F. and Chilou, T. and Ganesan, K. and Hatziargyriou, N.},
	year = {2026},
}

\begin{IEEEbiography}[{\includegraphics[width=1in,height=1.25in,clip,keepaspectratio]{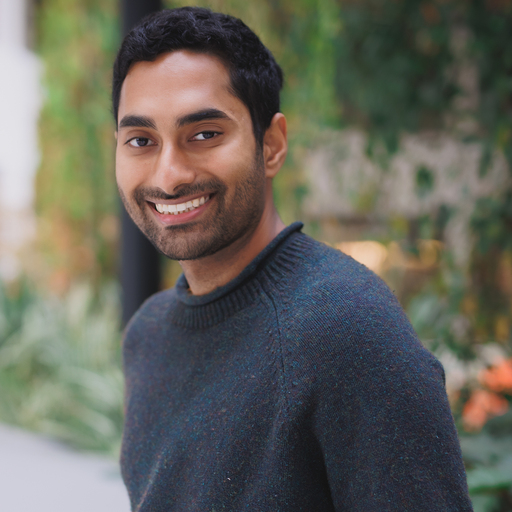}}]{Akshay Sreekumar} received the B.S. degree in electrical engineering and computer sciences from the University of California, Berkeley, Berkeley, CA, USA, and the M.S. degree in electrical and computer engineering from the University of California, Los Angeles, Los Angeles, CA, USA. He is currently a Ph.D. candidate in electrical engineering at Stanford University, Stanford, CA, USA, where he is a Burt and Deedee McMurtry Fellow. His research is at the intersection of large-scale optimization, learning, and control based methods for improving the efficiency, reliability, and operation of energy and computing infrastructure.
\end{IEEEbiography}

\begin{IEEEbiography}[{\includegraphics[width=1in,height=1.25in,clip,keepaspectratio]{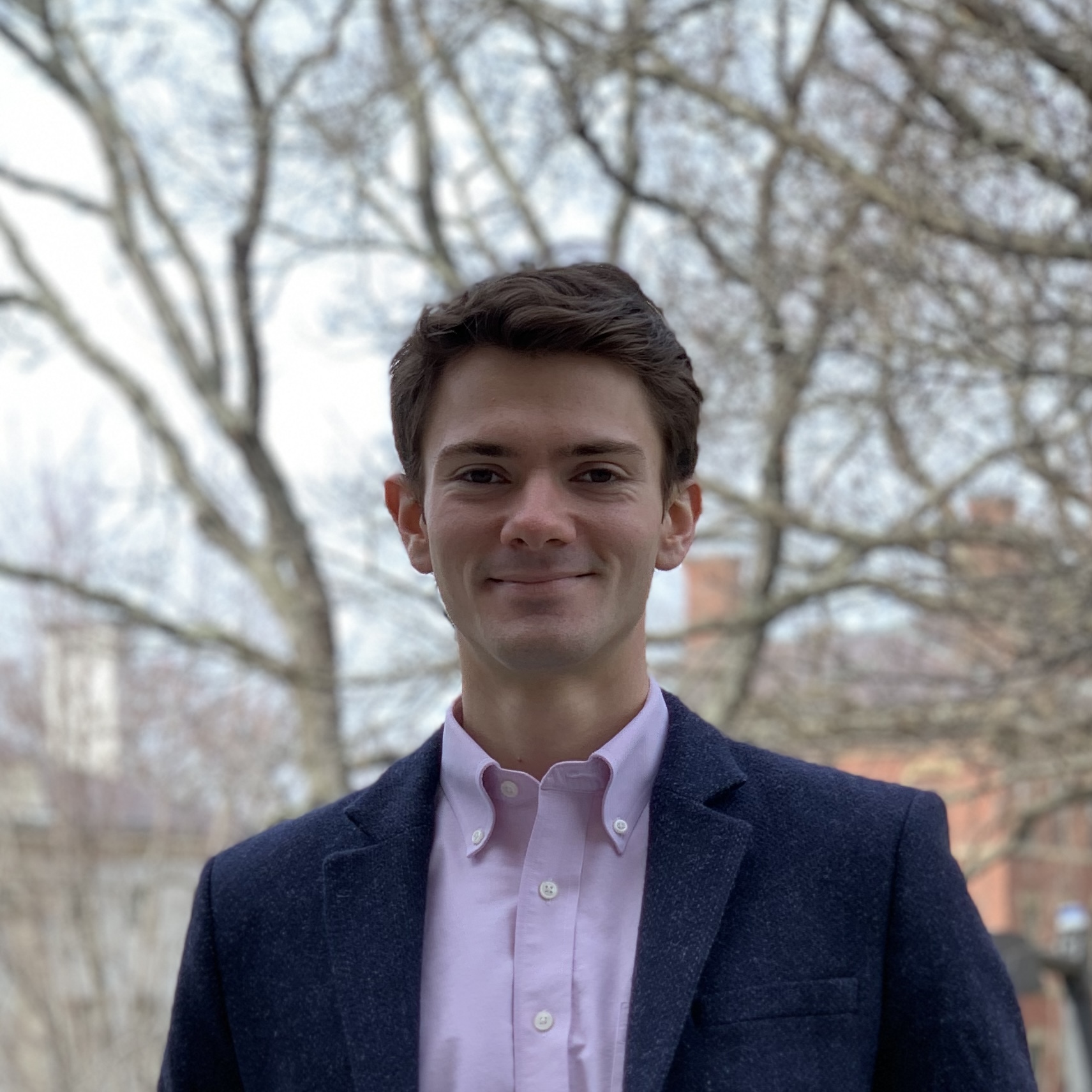}}]{Nicolas Christianson} received the A.B. degree in applied mathematics from Harvard University, Cambridge, MA, USA, in 2020, and the Ph.D. degree in computing and mathematical sciences from the California Institute of Technology, Pasadena, CA, USA, in 2025. He is currently an Assistant Professor in the Department of Computer Science at Johns Hopkins University; previously, he was a Stanford Energy Postdoctoral Fellow at Stanford University. His research lies at the intersection of algorithms, machine learning, and optimization, with applications to energy and computing systems. Dr. Christianson was the recipient of the California Institute of Technology's Ben P.C. Chou Doctoral Prize in Information Science and Technology and the ACM SIGEnergy Doctoral Dissertation Award.
\end{IEEEbiography}

\begin{IEEEbiography}[{\includegraphics[width=1in,height=1.25in,clip,keepaspectratio]{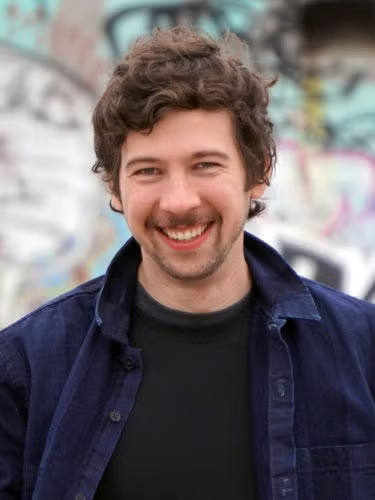}}]{Fiodar Kazhamiaka} is a researcher at Microsoft, affiliated with the Azure Research – Systems group, where he works on cloud efficiency and sustainability.
His research focus is to reduce datacenter carbon emissions and costs through server design and power/resource management, and developing interfaces between datacenters and the electricity grid.
He has published on topics spanning large-scale resource allocation and scheduling for cloud workloads, and carbon-aware design of datacenters and servers.
Fiodar Kazhamiaka received a PhD in Computer Science from the University of Waterloo.
\end{IEEEbiography}

\begin{IEEEbiography}[{\includegraphics[width=1in,height=1.25in,clip,keepaspectratio]{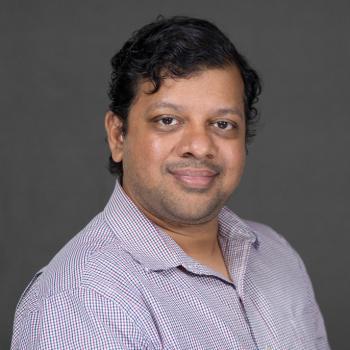}}]{Ram Rajagopal} is an Associate Professor of Civil and Environmental Engineering and Electrical Engineering at Stanford University, where he directs the Stanford Sustainable Systems Lab (S3L). Ram received his Ph.D in Electrical Engineering and Computer Sciences and M.A. in Statistics from the University of California, Berkeley. He is a recipient of the NSF CAREER Award, Powell Foundation Fellowship, Berkeley Regents Fellowship and the Makhoul Conjecture Challenge award.
\end{IEEEbiography}

\end{document}